\documentclass{scrartcl}

\usepackage{amsmath}        
\usepackage{amsthm}         
\usepackage{amssymb}        
\usepackage{amsfonts}       
\usepackage{mathtools}      
\usepackage{bm}             
\usepackage{graphicx}       
\usepackage{subfig}         
\usepackage{algorithm}      
\usepackage{algpseudocode}  
\usepackage{booktabs}       
\usepackage{multirow}       
\usepackage{cite}           
\usepackage{hyperref}       
\usepackage{url}            
\usepackage{xcolor}         
\usepackage{balance}        
\usepackage{array}          
\usepackage{float}          
\usepackage{textcomp}       
\usepackage[margin=1.25in]{geometry}  

\newcommand{\euro}{\texteuro}
\newcommand{\surede}{\textsc{Surde}}
\newcommand{\awve}{\textsc{Awve}}
\newcommand{\ici}{\textsc{Ici}}
\newcommand{\sure}{\textsc{Sure}}
\newcommand{\mse}{\mathrm{MSE}}

\newcommand{\E}{\mathbb{E}}
\newcommand{\Var}{\mathrm{Var}}

\DeclareMathOperator*{\argmin}{arg\,min}

\theoremstyle{plain}      
\newtheorem{theorem}{Theorem}
\newtheorem{lemma}{Lemma}

\theoremstyle{definition}  
\newtheorem{definition}{Definition}

\theoremstyle{remark}      
\newtheorem{remark}{Remark}

\makeatletter
\def\blfootnote{\gdef\@thefnmark{}\@footnotetext}
\makeatother

\title{Adaptive Derivative Estimation via Stein's Unbiased Risk}

\author{Yonathan~Murin and Ali~Ozer~Ercan%
\thanks{The authors are with Meta Reality Labs (e-mail: moriny@meta.com; aliercan@meta.com).}\vspace{-4mm}} 

\date{}

\begin{document}
\maketitle

\blfootnote{This work has been submitted to the IEEE for possible publication. Copyright may be transferred without notice, after which this version may no longer be accessible.}

\begin{abstract}
Estimating derivatives from noisy sampled data is fundamental to         
control, human--computer interaction, and biomedical engineering.                     
Causal FIR derivative filters offer a natural approach for this challenge,
yet their performance depend on their length. While short filters amplify                   
noise, long filters introduce smoothing bias.                                     
We present \surede{} (SURE Derivative Estimator), which addresses this                  
tradeoff at each time step by evaluating a data-driven cost derived
from Stein's Unbiased Risk Estimator (SURE) across a bank of candidate
lengths and soft-combining their outputs via exponential weighting.
We prove a minimax-optimal oracle inequality for the soft-combined
estimator and use it to derive the optimal weighting temperature in
closed form. Thus, the only tuning parameter for  \surede{} is the
noise variance.
Via numerical simulations we show that \surede{} consistently outperforms 
alternative adaptive methods (the Intersection of Confidence Intervals (ICI) rule and the Adaptive
Windowing Velocity Estimator (AWVE)) for first-derivative estimation.
We further show that \surede{} is robust to noise-variance misspecification (9\%
degradation over a $4\times$ range), and that it is superior to ICI and AWVE also over real data scenarios (the EuRoC MAV dataset).
\surede{} is causal, computationally light, and requires only a rough
estimate of the noise variance.
\end{abstract}


\begin{keywords}
Adaptive filters, derivative estimation, FIR filters, minimax methods, Stein's unbiased risk estimator.
\end{keywords}

\vspace{-4mm}
\section{Introduction}
\label{sec:introduction}

Estimating the derivative of a signal from discrete,
noisy samples is a common problem in many engineering disciplines.
In control systems, reliable velocity and acceleration estimates are
needed for feedback and feed-forward compensation
\cite{levant1998robust}.
In human--computer interaction, cursor velocity impacts both latency and
precision of pointing devices, as human sensitivity to visual feedback perturbations 
exhibits a velocity-dependent decay \cite{saunders2004}, 
and adaptive smoothing filters such as the 1\euro{} filter \cite{casiez2012oneeuro} are ubiquitous.
In brain--computer interfaces (BCIs), neural decoding pipelines estimate
the time derivatives of intended movement trajectories from noisy
cortical recordings \cite{schwartz2006bcireview}.
In each of these settings the estimator must operate \emph{causally}
and \emph{sequentially} to produce an estimate at every time step
using only \emph{current and past} observations. Additionally, it must cope with the
fundamental tradeoff between \emph{variance} (noise amplification) and
\emph{bias} (signal distortion).

The simplest causal derivative estimator of the signal $y_n$, 
sampled at sampling interval $T_s$, is the first-order difference $(y_n - y_{n-1})/T_s$.  
This estimator is unbiased for linear signals, yet it amplifies observation noise by a factor proportional to
$1/T_s$, leading to estimates with large variance at typical sampling rates. 
Increasing the length of the estimation window reduces the noise variance at the expense
of introducing bias: the filter implicitly fits a low-order polynomial
to the data, thus distorting signal components deviating from this polynomial model.
This tradeoff is governed by a single design parameter: the \emph{filter length} $N$.
In practice, it is common to choose $N$  by trial and error or by tuning over a
representative calibration signal. 
Such a fixed choice is inherently suboptimal because the optimal $N$
depends on the \emph{local} signal-to-noise ratio, which may vary over time.
This motivates an adaptive selection of $N$.

\vspace{-2mm}
\subsection{Related Work}
\label{ssec:intro_related}

\subsubsection{Finite  impulse response (FIR) Derivative Filters}
There are two main approaches for numerical differentiation. 
The first applies smoothing via window function (e.g., Boxcar, Hanning, Hamming)
to the coefficients of the difference filter, resulting in a filter
with well-understood spectral properties \cite{rabiner1970firdiff}.
The second fits a \emph{local} polynomial to the data using
\emph{least-squares} regression (or, equivalently, convolution with
Savitzky--Golay coefficients \cite{savitzky1964smoothing}) and then differentiates
the polynomial analytically. 
One can compare these approaches using the Variance Reduction Factor (VRF) framework
\cite{blair1997vrf} which measures the reduction to the noise level at the filter's 
output.

\subsubsection{Adaptive Bandwidth Selection}

Adaptive bandwidth selection in nonparametric statistics is rooted in Lepski’s 
principle~\cite{lepski1990problem}, which selects the largest window where estimates 
remain statistically consistent across resolutions. 
This approach was later extended to kernel regression to achieve minimax-optimal 
oracle inequalities~\cite{lepski1997optimal}. 
Katkovnik’s Intersection of Confidence Intervals (ICI) rule~\cite{katkovnik1999ici} 
provides a practical implementation of this theory, sequentially testing for overlapping 
confidence intervals at increasing scales. 
Balancing theoretical guarantees with computational efficiency, 
the ICI rule is now common in signal and image processing~\cite{katkovnik2006local}.

Based on similar principles, the Adaptive Windowing Velocity Estimator (AWVE)~\cite{sharifi2000awve} expands 
a backward window while maintaining polynomial-fit residuals below a noise-calibrated threshold. 
As detailed in Section~\ref{sec:adaptive_ci}, AWVE effectively functions as a variant of the ICI rule, 
substituting estimate consistency with a residual-based goodness-of-fit criterion.

These adaptive methods address the limitations of traditional estimators. 
While Kalman filtering~\cite{kalman1960newapproach} is optimal for known state-space models, 
it struggles with the model mismatch inherent in derivative estimation. 
Alternatively, total-variation (TV) regularization~\cite{chartrand2011tvdiff} excels at 
recovering piecewise-constant derivatives but is non-causal and computationally expensive. 
Similarly, Savitzky–Golay filters~\cite{savitzky1964smoothing} rely on non-causal symmetric windows, 
whereas their causal variants sacrifice data efficiency and incur boundary distortions.

\subsubsection{Stein's Unbiased Risk Estimator}

Stein's Unbiased Risk Estimator (\sure{}) \cite{stein1981sure} 
provides a tool for model selection: it forms an \emph{unbiased} 
estimate of the MSE of a given estimator \emph{without} access to the 
true signal, requiring only that the noise be Gaussian with known variance (or more generally, from an exponential family). 
\sure{} has been applied extensively for model selection in wavelet denoising \cite{donoho1995wavelet}, 
image restoration \cite{eldar2009gsure, ramani2008montecarlo}, and regularization parameter selection \cite{deledalle2014sugar}. 
Beyond these, it has been utilized for adaptive window-size selection in local polynomial regression \cite{Krishnan2012}, 
yet to the best of our knowledge, it has not previously been used for adaptive derivative filter design.

\vspace{-4mm}
\subsection{Main Contributions}
\label{ssec:intro_contributions}

In this work we make the following contributions:
                                                                       
\paragraph{SURE-based adaptive length selection}                                
  We derive a closed-form, per-sample cost function from \sure{} that  
  serves as an unbiased estimate of the MSE of \emph{any} linear FIR
  derivative filter.  Using this cost we propose \surede{}, an
  algorithm that evaluates a bank of candidate filter lengths at each
  time step and selects the one minimizing the estimated MSE
  (\emph{hard combining}), or forms a weighted average over all
  candidates via exponential weights (\emph{soft combining}).

\paragraph{Oracle inequality and optimal temperature}
  We prove that the soft-combined estimator satisfies a minimax-optimal
  oracle inequality, with excess risk
  $O(\sigma^2\sqrt{\log K})$ for $K$ candidate lengths.
  Minimizing this bound yields a closed-form expression for the
  soft-combining temperature in terms of the cost-fluctuation
  variance, eliminating the only free parameter and making
  \surede{} fully determined by the noise variance alone.

\paragraph{Empirical validation}
  Monte~Carlo simulations on synthetic signals show that \surede{}
  with optimal soft combining outperforms the ICI
  rule and \awve{} for first-derivative estimation. The method degrades by
  only 9\% under $4\times$ noise-variance misspecification, 
  compared to 36\% for \ici{} and 222\% for \awve{}.
  Experiments on real MAV trajectories (EuRoC dataset) confirm
  that \surede{} achieves the lowest error among adaptive methods
  at practical noise levels without retuning.

\vspace{-4mm}
\subsection{Organization}
\label{ssec:intro_organization}
\vspace{-1mm}

The rest of the paper is organized as follows.
Section~\ref{sec:preliminaries} provides the problem formulation and preliminaries, 
Section~\ref{sec:surede} introduces the \surede{} framework and Section~\ref{sec:oracle_inequality} 
derives an oracle inequality for \surede{}. 
Section~\ref{sec:numerical_results} reports simulation results
, while Section~\ref{sec:real_data} validates the method on real trajectory
data from the EuRoC MAV dataset.
Section~\ref{sec:conclusion} concludes the paper while discussing future
directions.


\vspace{-3mm}
\section{Problem Formulation and Preliminaries}
\label{sec:preliminaries}

\subsection{Notation}
\label{sec:notation}

We denote scalar signals at index $k$ as $x_k$ and their $n$-th order derivatives as $x_k^{(n)}$. 
Boldface lower-case letters represent column vectors (e.g., $\mathbf{x} = [x_0, \dots, x_{N-1}]^T$), 
while sans-serif upper-case letters denote matrices ($\mathsf{A}, \mathsf{B}, \mathsf{Q}$). 
Specifically, $\mathsf{I}_M$ and $\mathbf{1}_M$ are the identity matrix and all-ones vector of size $M$, respectively (omitted when context-evident). 
$\operatorname{Tr}(\cdot)$ denotes the matrix trace, $\mathcal{N}$ represents sets, 
and $\mathbb{E}\{\cdot\}$ is the expectation operator. 
For an FIR filter with coefficients $\mathbf{h}$, the $Z$-transform is $H(z) = \sum_{m} h_m z^{-m}$; 
linear convolution $\tilde{\mathbf{x}} = \mathbf{x} * \mathbf{h}$ follows the convention $\tilde{x}_k = \sum_{m} h_m x_{k-m}$.

\vspace{-4mm}
\subsection{Signal Model}
\label{sec:signal_model}

The observation model is given by:
\vspace{-2mm}
\begin{equation}
  y_k = x_k + w_k, \qquad k = 0, 1, 2, \dots
  \label{eq:observation_model}
\end{equation}
where $x_k$ is the unknown deterministic signal of interest and
$\{w_k\}$ is a zero-mean {\em wide-sense-stationary} noise process with known auto-correlation
function $r_w(\ell) = \mathbb{E}\{w_k \, w_{k-\ell}\}$.
Over a window of $N$ consecutive samples the observation vector is given by
$\mathbf{y} = \mathbf{x} + \mathbf{w}$,
with noise covariance matrix $\mathsf{Q}_w^{(N)} \hspace{-2pt}
  \hspace{-2pt} \in \hspace{-2pt} \mathbb{R}^{N \times N},\;
  [\mathsf{Q}_w^{(N)}]_{i,j} \hspace{-1pt} = \hspace{-1pt} r_w(i \hspace{-1pt} - \hspace{-1pt} j)$.

Our \textbf{objective} is to estimate the $n$-th derivative $x^{(n)}_k$ from the
noisy observations~$\{y_j\}_{j \le k}$ in a causal, sample-by-sample
manner.

\begin{definition}[Variance Reduction Factor (VRF)]
  \label{def:vrf}
  For a filter $\mathbf{h} \hspace{-1pt} = \hspace{-1pt} [h_0, h_1, \dots, h_{L-1}]^T$ applied to 
  white noise with variance $\sigma^2$, the output variance is 
  $\sigma_{\mathrm{out}}^2 \hspace{-1pt} = \hspace{-1pt} \sigma^2 \lVert \mathbf{h} \rVert_2^2$. and the
  VRF is $\mathrm{VRF}(\mathbf{h}) = \lVert\mathbf{h}\rVert_2^2$.
\end{definition}

\begin{remark}
    For colored noise, the output variance generalizes to 
    $\sigma_{\mathrm{out}}^2 = \mathbf{h}^T \mathsf{Q}_w^{(L)} \mathbf{h}$. 
    The VRF thus serves as a noise-independent metric for comparing 
    the noise suppression of filters with varying lengths.
\end{remark}

\vspace{-4mm}
\subsection{Derivative Filters}
\label{sec:derivative_filters}

\subsubsection{Direct Derivative Filters}
\label{sec:direct_filters}
The simplest estimate of the first derivative is the backward
difference $x^{(1)}_k \hspace{-1pt} = \hspace{-1pt} x_k \hspace{-1pt} - \hspace{-1pt} x_{k-1}$,
realized by the FIR filter $\mathbf{h}_d^{(1)}  = [1, -1]$ with
transfer function $H_d^{(1)}(z) \hspace{-1pt} = \hspace{-1pt} 1 \hspace{-1pt} - \hspace{-1pt} z^{-1}$.
Higher-order derivatives are obtained by repeated application of the
first-order difference, where the $n$-th order derivative filter has
transfer function $H_d^{(n)}(z) \hspace{-1pt} = \hspace{-1pt} \frac{1}{n!}\bigl(1 \hspace{-1pt} - \hspace{-1pt} z^{-1}\bigr)^n$,
corresponding to a length-$(n+1)$ FIR filter whose coefficients are
the binomial weights scaled by $1/n!$. This normalization 
ensures that the filter output approximates the $n$-th derivative 
(rather than the $n$-th finite difference) for signals sampled at unit rate.  
When the sampling interval is $\Delta t \hspace{-2pt} \neq \hspace{-2pt} 1$, an additional factor 
of $\Delta t^{-n}$ is needed.

For a window of $N$ observations the derivative filter can also be
represented in matrix form. Defining the Toeplitz matrix 
$\mathsf{H}_d^{(n)} \hspace{-1pt} \in \hspace{-1pt} \mathbb{R}^{(N-n) \times N}$ 
whose $i$-th row contains the coefficients of
$\mathbf{h}_d^{(n)}$ starting at column~$i$, then the vector of
noiseless derivative samples over the window is given by
$\mathbf{x}^{(n)} = \mathsf{H}_d^{(n)}\,\mathbf{x}$.

Applying the derivative filter directly to the noisy observations one obtains
\vspace{-1mm}
\begin{equation}
  \tilde{y}^{(n)}_k
  = (\mathbf{y} * \mathbf{h}_d^{(n)})_k
  = x^{(n)}_k + (\mathbf{w} * \mathbf{h}_d^{(n)})_k,
  \label{eq:noisy_derivative_raw}
\end{equation}
\vspace{-1mm}
which is an unbiased but typically very noisy estimate of
$x^{(n)}_k$.  To reduce the noise one may apply a second smoothing FIR filter
$\mathbf{h}_w(N)$ of length~$N$:
\vspace{-1mm}
\begin{equation}
  \hat{x}^{(n)}_k
  = \bigl(\mathbf{y} * \mathbf{h}_d^{(n)} * \mathbf{h}_w(N)\bigr)_k.
  \label{eq:derivative_estimate_cascade}
\end{equation}
As $N$ grows, the noise variance decreases but the estimate becomes more
biased if the signal deviates from the model assumed by
$\mathbf{h}_w$.  This fundamental \emph{bias--variance tradeoff} is
the central theme of Section~\ref{sec:bias_variance}.

\subsubsection{Least-Squares Derivative Filters}
\label{sec:ls_filters}

A more principled way to obtain the smoothing filter is to fit a local
polynomial model of degree $p \ge n$ to the most recent $N$ samples
and read off the $n$-th derivative from the fitted coefficients.
To formalize this approach, let $\mathbf{v} = [0, 1, \dots, N-1]^T$ be the 
vector of sample indices in the window, and define the centered index vector
$\tilde{\mathbf{v}} \hspace{-1pt} = \hspace{-1pt} \mathbf{v} \hspace{-1pt} - \hspace{-1pt} \mu_v$ with
$\mu_v \hspace{-1pt} = \hspace{-1pt} (N-1)/2$. The Vandermonde-like regression matrix is
\begin{equation}
  \mathsf{A}
  = \bigl[\mathbf{1},\;
          \tilde{\mathbf{v}},\;
          \tilde{\mathbf{v}}^{2},\;
          \dots,\;
          \tilde{\mathbf{v}}^{p}\bigr]
  \in \mathbb{R}^{N \times (p+1)},
  \label{eq:vandermonde}
\end{equation}
where $\tilde{\mathbf{v}}^{j}$ denotes the element-wise $j$-th
power. Given the observation window
$\tilde{\mathbf{x}} \hspace{-1pt} = \hspace{-1pt} [y_{k-N+1}, \dots, y_k]^T$, the coefficient
vector that minimizes the sum of squared residuals is
\begin{equation}
  \mathbf{a}
  = \bigl(\mathsf{A}^T \mathsf{A}\bigr)^{-1}
    \mathsf{A}^T \tilde{\mathbf{x}}
  = \mathsf{B}\,\tilde{\mathbf{x}},
  \label{eq:ls_solution}
\end{equation}

Under the local polynomial model of degree~$p$, the $n$-th
derivative at the \emph{right edge} of the window (index~$k$) is
given by the $(n+1)$-th row of $\mathsf{B}$:
\begin{equation}
\hat{x}^{(n)}_k
= \mathbf{b}_{n+1}^T \tilde{\mathbf{x}},
\qquad
\mathbf{b}_{n+1}^T
= [\mathsf{B}]_{n+1,:}\,.
\label{eq:ls_derivative_filter}
\end{equation}
The vector $\mathbf{v}(N) \triangleq \mathbf{b}_{n+1}$ (reversed)
is the FIR estimation filter of length~$N$.

\begin{remark}[Connection to Savitzky--Golay filters]
  \label{rem:sg}
  While the classical Savitzky--Golay filter~\cite{savitzky1964smoothing} uses a symmetric, 
  non-causal window, Eq.~\eqref{eq:ls_derivative_filter} employs a one-sided backward window, 
  ensuring causality for real-time processing. 
  Both methods evaluate the derivative at the window center, and for $p=1$
  this coincides with the right edge due to the constant slope.
\end{remark}

For a fixed degree $p$ and order $n$, the estimation filter $\mathbf{v}(N)$ depends 
solely on the window length $N \in \mathcal{N}$ and can be pre-computed. 
The derivative-smoothing operation in \eqref{eq:derivative_estimate_cascade} is compactly 
expressed as $\hat{x}^{(n)}_k \hspace{1pt} = \hspace{1pt} \mathbf{v}(N)^T \mathbf{y}_k(N)$, 
where $\mathbf{y}_k(N) \hspace{1pt} = \hspace{1pt} [y_{k-N+1}, \dots, y_k]^T$ 
denotes the observation window ending at time $k$.

\vspace{-3mm}
\subsection{The Bias--Variance Tradeoff}
\label{sec:bias_variance}

The MSE of the derivative estimate, calculated by treating $x_k^{(n)}$ as deterministic
and taking the expectation over the noise, decomposes into bias and variance:
\begin{equation}
    \mathrm{MSE}_{k}(N) = \underbrace{\bigl(\mathbb{E}\{\hat{x}^{(n)}_k\} - x^{(n)}_k\bigr)^2}_{\text{bias}^2} + \underbrace{\operatorname{Var}\{\hat{x}^{(n)}_k\}}_{\text{variance}}.
    \label{eq:mse_decomposition}
\end{equation}
This decomposition reveals a fundamental trade-off: 
short windows minimize bias but yield high variance, whereas long windows 
reduce variance at the expense of modeling bias. 
Since the optimal $N$ depends on the \emph{unknown}, time-varying signal, 
a data-driven criterion is required to adaptively balance these components at each sample $k$.

\vspace{-3mm}
\subsection{Adaptive Bandwidth via Confidence Intervals}
\label{sec:adaptive_ci}

A natural approach to the adaptive window selection problem is to
start with the shortest (highest-variance) window and grow it as
the resulting estimates remain statistically consistent. 
Two related algorithms rely this principle: the Intersection of
Confidence Intervals (ICI) rule~\cite{katkovnik1999ici} and the
Adaptive Windowing Velocity Estimator (\awve{})~\cite{awve}.

The \ici{} rule~\cite{katkovnik1999ici}, grounded in Lepski’s adaptive 
estimation theory~\cite{lepski1990problem, lepski1997optimal}, selects the 
largest window length statistically consistent with all smaller-scale estimates. 
Its sole tuning parameter, $\Gamma$, controls the confidence interval width: 
larger values favor longer, lower-variance windows, while smaller values 
prioritize shorter, lower-bias windows. 
With an appropriate $\Gamma$, the \ici{} rule achieves the near-optimal 
oracle inequalities established by Lepski~\cite{lepski1997optimal}.

The \awve{} provides a related but distinct heuristic based on residual testing. 
It starts with the shortest window and iteratively increases its length as long as the polynomial model adequately explains the
data (all residuals below $\alpha\,\sigma$). The first window whose residuals exceed the threshold signals a model mismatch,
and the previous window length is selected.

\begin{remark}[Connection between \awve{} and \ici{}]
  \label{rem:awve_ici_connection}
  \awve{} and \ici{} share a sequential structure, expanding candidate windows from 
  shortest to longest until a statistical test fails. 
  The fundamental distinction lies in their criteria: \ici{} evaluates estimate consistency 
  via the intersection of confidence intervals across bandwidths, 
  whereas \awve{} assesses model adequacy through an intra-bandwidth goodness-of-fit 
  test on polynomial-fit residuals.
    
  Both methods grow the window until significant bias is detected. 
  Hence, \awve{} functions as an \ici{} variant where the consistency test is replaced by a 
  residual-based model-mismatch check. Their tuning parameters, $\Gamma$ for \ici{} and $\alpha$ 
  for \awve{}, are analogous: larger values increase bias tolerance, favoring longer, 
  lower-variance windows.
\end{remark}


\vspace{-3mm}
\section{The SURE Derivative Estimator (\surede{})}
\label{sec:surede}

We now present our main contribution: a principled, 
computationally efficient alternative to \ici{} and \awve{} based on 
Stein’s Unbiased Risk Estimate (SURE). 
By replacing heuristic thresholds with an unbiased MSE estimate \eqref{eq:mse_decomposition}, 
this approach enables both hard selection and soft combination of window lengths. 

\vspace{-3mm}
\subsection{Hard-Combining \surede{}}
\label{sec:surede_hard}

Recalling that the candidate window lengths are $N \hspace{-2pt} \in \hspace{-2pt} \mathcal{N}$, we start with defining the following variables.
Let $N_0$ be the number of recent samples over which the $n$-th
derivative is approximately constant
\footnote{If the derivative is not approximately constant over some window then estimation 
via filtering is not relevant. In practice, 
this implies that the sampling rate is high enough w.r.t. the rate of change of the signal.}, i.e.\
$x^{(n)}_k \hspace{-1pt} \approx \hspace{-1pt} x^{(n)}_{k-1} \hspace{-1pt} \approx \hspace{-1pt} \cdots \hspace{-1pt} \approx \hspace{-1pt}
x^{(n)}_{k-N_0+1}$. In practice $N_0 \hspace{-1pt} = \hspace{-1pt} N_1 \hspace{-1pt} - \hspace{-1pt} n$ 
where $N_1 \hspace{-1pt} = \hspace{-1pt} \min(\mathcal{N})$.

Further, $\mathbf{s}^{(n)}_{N_0}$ is the \emph{s-vector} of length~$N$,
defined as the column-wise partial sums of the first $N_0$ rows of
the derivative filter matrix:
\begin{equation}
    \bigl(\mathbf{s}^{(n)}_{N_0}\bigr)^T
    = \bar{\mathbf{1}}_{N_0}^T \mathsf{H}_d^{(n)}
    \in \mathbb{R}^{1 \times N},
    \label{eq:s_vector_def}
\end{equation}
where $\bar{\mathbf{1}}_{N_0}
\hspace{-1pt} = \hspace{-1pt} [\mathbf{1}_{N_0};\, \mathbf{0}_{N-n-N_0}] \hspace{-1pt} \in \hspace{-1pt} \mathbb{R}^{N-n}$
selects the first $N_0$ rows of $\mathsf{H}_d^{(n)}$.

Finally, define the cost function $\mathfrak{c}_{k}(N)$, where in Appendix \ref{app:derivation} 
we prove that this is the \sure{} cost function:
\vspace{-3mm}
\begin{align}
  \mathfrak{c}_{k}(N)
   & = N_0 \bigl(\mathbf{v}(N)^T \mathbf{y}_k(N)\bigr)^2
  + 2\,\mathbf{v}(N)^T \mathsf{Q}_w^{(N)}
      \mathbf{s}^{(n)}_{N_0} \notag \\
   & \qquad - 2\,\mathbf{v}(N)^T \mathbf{y}_k(N)\;
      \bigl(\mathbf{s}^{(n)}_{N_0}\bigr)^T \mathbf{y}_k(N),
  \label{eq:sure_cost}
\end{align}  

The following theorem states that $\mathfrak{c}_{k}(N)$ is an unbiased estimate of $\mathrm{MSE}_{k}(N)$ up to a constant independent of $N$.

\begin{theorem}[Unbiasedness of the SURE cost]
  \label{thm:sure_unbiased}
  Under the signal model \eqref{eq:observation_model} with Gaussian
  noise $\mathbf{w} \sim \mathcal{N}(\mathbf{0}, \mathsf{Q}_w^{(N)})$,
  the cost function $\mathfrak{c}_{k}(N)$ satisfies
  \begin{equation}
    \mathbb{E}\bigl\{\mathfrak{c}_k(N)\bigr\}
    = \mathrm{MSE}_k(N) + C,
    \label{eq:sure_unbiased}
  \end{equation}
  where $C$ is independent of $N$. Consequently, minimizing $\mathfrak{c}_{k}(N)$ over $\mathcal{N}$ 
  is equivalent to minimizing the MSE up to $N$-invariant terms.
\end{theorem}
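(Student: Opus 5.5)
The plan is a direct second-moment computation. Write $\theta_N \triangleq \mathbf{v}(N)^T\mathbf{y}_k(N)$ for the candidate estimate, $d \triangleq x^{(n)}_k$ for the target, and $\mathbf{s}\triangleq\mathbf{s}^{(n)}_{N_0}$. The key observation is that $\mathbf{s}^T\mathbf{y}_k(N)=\bar{\mathbf{1}}_{N_0}^T\mathsf{H}_d^{(n)}\mathbf{y}_k(N)$ is the sum of $N_0$ raw noisy derivative samples \eqref{eq:noisy_derivative_raw}. It therefore acts as an unbiased surrogate for $N_0 d$. Taking expectations over $\mathbf{w}$ gives
\begin{equation*}
\mathbb{E}\{\mathbf{s}^T\mathbf{y}_k(N)\} = \mathbf{s}^T\mathbf{x} = \sum_{j=0}^{N_0-1} x^{(n)}_{k-j} \approx N_0\, d,
\end{equation*}
where the last step uses the defining property of $N_0$, namely that the $n$-th derivative is approximately constant over the $N_0$ most recent samples. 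The resulting expression is the ``SURE with a noisy reference'' construction: the cross term $-2\theta_N\,\mathbf{s}^T\mathbf{y}$ replaces the unknown $-2N_0\theta_N d$, and the term $2\mathbf{v}^T\mathsf{Q}_w\mathbf{s}$ removes the bias created by correlated noise in that replacement.

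The key steps, in order, are as follows.
\begin{enumerate}
\item Compute $\mathbb{E}\{\theta_N^2\}=(\mathbf{v}^T\mathbf{x})^2+\mathbf{v}^T\mathsf{Q}_w^{(N)}\mathbf{v}$.
\item Compute the cross moment $\mathbb{E}\{\theta_N\,\mathbf{s}^T\mathbf{y}\}=(\mathbf{v}^T\mathbf{x})(\mathbf{s}^T\mathbf{x})+\mathbf{v}^T\mathsf{Q}_w^{(N)}\mathbf{s}$. In the Gaussian case this can equivalently be read as Stein's lemma, $\mathbb{E}\{\theta_N\,\mathbf{s}^T\mathbf{w}\}=\mathbb{E}\{\nabla_{\mathbf{w}}\theta_N\}^T\mathsf{Q}_w\mathbf{s}$, which ties the argument to the SURE derivation of Appendix~\ref{app:derivation}.
\item Substitute into \eqref{eq:sure_cost}. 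The two $\mathbf{v}^T\mathsf{Q}_w\mathbf{s}$ terms cancel, leaving
\begin{equation*}
\mathbb{E}\{\mathfrak{c}_k(N)\} = N_0\,\mathbb{E}\{\theta_N^2\} - 2\,\mathbb{E}\{\theta_N\}\,\mathbf{s}^T\mathbf{x}.
\end{equation*}
\item Using $\mathbf{s}^T\mathbf{x}=N_0 d$, this becomes $N_0\bigl(\mathbb{E}\{\theta_N^2\}-2d\,\mathbb{E}\{\theta_N\}\bigr)$.
\item Compare with \eqref{eq:mse_decomposition}, which gives $\mathrm{MSE}_k(N)=\mathbb{E}\{\theta_N^2\}-2d\,\mathbb{E}\{\theta_N\}+d^2$. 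Hence
\begin{equation*}
\mathbb{E}\{\mathfrak{c}_k(N)\} = N_0\,\mathrm{MSE}_k(N) - N_0 d^2.
\end{equation*}
\end{enumerate}
The additive constant $-N_0d^2$ does not depend on $N$. The factor $N_0>0$ is also common to all $N$, because $N_0=N_1-n$ with $N_1=\min(\mathcal{N})$. So either one normalizes the cost by $N_0$ to get exactly the form \eqref{eq:sure_unbiased}, or one reads $C$ as covering this harmless positive rescaling. In both readings the argmin over $\mathcal{N}$ is unchanged, which is the stated consequence.

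Two bookkeeping points need care, and I expect the main obstacle to be there rather than in the algebra. The first is to check that $\mathbf{s}$ is supported on the most recent $N_0+n$ entries of $\mathbf{y}_k(N)$, matching the ordering convention of $\mathsf{H}_d^{(n)}$ relative to $\mathbf{y}_k(N)=[y_{k-N+1},\dots,y_k]^T$. Only then is $\mathbf{s}^T\mathbf{y}_k(N)$, and hence the constant $C$, genuinely independent of $N$ once $N\ge N_1$. I would verify this by reindexing the rows of $\mathsf{H}_d^{(n)}$ from the right edge of the window, possibly reversing the vector as was done for $\mathbf{v}(N)$. The second is to state explicitly that the identity $\mathbf{s}^T\mathbf{x}=N_0 d$ is exact only when the derivative is exactly constant over the window, and otherwise holds up to an approximation error $2\,\mathbb{E}\{\theta_N\}\,(N_0d-\mathbf{s}^T\mathbf{x})$. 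Finally, Gaussianity enters only through Stein's identity; the second-moment computation above shows the result holds for any zero-mean noise with covariance $\mathsf{Q}_w^{(N)}$.
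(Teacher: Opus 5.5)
Your proof is correct, and it takes a more elementary route than the paper. The paper does not compute moments directly. It passes to the derivative-domain model $\tilde{\mathbf{y}} = \mathsf{H}_d^{(n)}\mathbf{y}$, takes as target the $N_0$ derivative samples selected by $\bar{\mathbf{1}}_{N_0}$, and treats the estimator as $N_0$ copies of $\mathbf{v}^T\mathbf{y}$. It then evaluates the norm, Jacobian-trace and cross terms and plugs them into Eldar's generalized SURE formula, dropping the squared norm of the target. For a linear filter, the Stein step in that formula is exactly your covariance identity $\mathbb{E}\{\mathbf{v}^T\mathbf{y}\,\mathbf{s}^T\mathbf{y}\} = (\mathbf{v}^T\mathbf{x})(\mathbf{s}^T\mathbf{x}) + \mathbf{v}^T\mathsf{Q}_w^{(N)}\mathbf{s}$. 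So both proofs rest on the same cancellation of the $\mathbf{v}^T\mathsf{Q}_w^{(N)}\mathbf{s}$ terms.

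Your version gains three things over the paper's:
\begin{enumerate}
\item It is self-contained, and it shows that Gaussianity is not needed for this theorem.
\item It makes explicit that the cost tracks $N_0\,\mathrm{MSE}_k(N)$ rather than $\mathrm{MSE}_k(N)$. The paper's vector risk carries the same factor $N_0$, so the argmin conclusion survives, but the displayed identity needs that rescaling.
\item It identifies that $\mathbf{s}$ must sit on the most recent $N_0+n$ samples for the dropped constant to be $N$-independent. Your concern here is well founded: $\mathbf{h}_d^{(1)}=[1,-1]$ placed in $\mathsf{H}_d^{(n)}$ only gives a backward difference if the window is read newest-first, and that is the ordering under which your check passes.
\end{enumerate}

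The paper's vector-target framing buys exactness in return. It gives, with no approximation,
\[
\mathbb{E}\{\mathfrak{c}_k(N)\} = \sum_{j=0}^{N_0-1}\mathbb{E}\{(\theta_N - x^{(n)}_{k-j})^2\} - \sum_{j=0}^{N_0-1}\bigl(x^{(n)}_{k-j}\bigr)^2,
\]
where $\theta_N$ is your estimate. The constant-derivative assumption is then needed only to identify this averaged risk with $N_0\,\mathrm{MSE}_k(N)$. In your route the same discrepancy appears as the $N$-dependent remainder $2\,\mathbb{E}\{\theta_N\}(N_0 d - \mathbf{s}^T\mathbf{x})$. If you complete the square in your step~3 against each $x^{(n)}_{k-j}$ rather than against $d$, you recover the exact identity directly.
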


\begin{proof}
  See Appendix~\ref{app:derivation}.
\end{proof}
\vspace{-1mm}

Based on Theorem \ref{thm:sure_unbiased}, the hard-combining \surede{} selects the window length that minimizes the estimated risk 
$N_k^{\mathrm{hard}} \hspace{-1pt} = \hspace{-1pt} \operatorname*{argmin}_{N \in \mathcal{N}} \mathfrak{c}(N)$
The hard-combining \surede{} algorithm is summarized in Algorithm \ref{alg:surede_hard}.

\begin{algorithm}[t]
\caption{Hard-Combining \surede{}}
\label{alg:surede_hard}
\begin{algorithmic}[1]
\Require A set $\mathcal{N} \hspace{-2pt} = \hspace{-2pt} \{N_1 \hspace{-2pt} < \hspace{-2pt} \cdots \hspace{-2pt} < \hspace{-2pt} N_M\}$;
         polynomial degree~$p$; derivative order~$n$;
         noise covariance $\mathsf{Q}_w$; parameter~$N_0$.
\Statex  \textit{Pre-computation (once):}
\State For each $N \in \mathcal{N}$ compute and store $\mathbf{v}(N)$
       via \eqref{eq:ls_derivative_filter}.
\State For each $N \in \mathcal{N}$ compute and store
       $\mathbf{s}^{(n)}_{N_0}$ via \eqref{eq:s_vector_def} and
       $\tau(N) = \mathbf{v}(N)^T \mathsf{Q}_w^{(N)} \mathbf{s}^{(n)}_{N_0}$.
\Statex  \textit{On-line (every sample~$k$):}
\For{each $N \in \mathcal{N}$}
  \State $e(N) \hspace{-2pt} \gets \hspace{-2pt} \mathbf{v}(N)^T\,\mathbf{y}_k(N)$
          \Comment{LS estimate for window~$N$}
  \State $r(N) \hspace{-2pt} \gets \hspace{-2pt} \bigl(\mathbf{s}^{(n)}_{N_0}\bigr)^T
                      \mathbf{y}_k(N)$
  \State $\mathfrak{c}(N) \hspace{-2pt} \gets \hspace{-2pt} N_0 e(N)^2
          +  2\,\tau(N)
         \hspace{-2pt} - \hspace{-2pt} 2e(N) r(N)$
\EndFor
\State $N^* \gets \operatorname{argmin}_{N}\;\mathfrak{c}(N)$
\State \Return $\hat{x}^{(n)}_k = e(N^*)$
\end{algorithmic}
\end{algorithm}

\begin{remark}[Generalization to linear estimators]
  \label{rem:sure_any_linear}
  The \sure{} cost \eqref{eq:sure_cost} and subsequent oracle inequality (Theorem~\ref{thm:oracle_ineq}) 
  are not restricted to LS polynomial filters. 
  They apply to any linear estimator $\hat{x}^{(n)}_k = \mathbf{v}^T \mathbf{y}_k(N)$, 
  including Lanczos~\cite{lanczos1956applied}, noise-robust~\cite{holoborodko2008smooth}, 
  maxflat~\cite{selesnick2002maximally}, or windowed-difference designs. 
  The framework is filter-agnostic: provided the estimator is linear in $\mathbf{y}_k(N)$, 
  all theoretical guarantees hold with an effective variance $V(N) = \mathbf{v}^T \mathsf{Q}_w \mathbf{v}$ 
  tailored to the specific filter coefficients. 
  While we prioritize LS filters for their minimal variance among polynomial designs 
  (Section~\ref{sec:ls_filters}), any custom FIR derivative filter may be substituted.
\end{remark}

\vspace{-3mm}
\subsection{Computational Complexity Comparison}
\label{sec:complexity}

After pre-computation of the filter vectors $\mathbf{v}(N)$,
the s-vectors $\mathbf{s}^{(n)}_{N_0}$, and the scalar
$\mathbf{v}(N)^T \mathsf{Q}_w^{(N)} \mathbf{s}^{(n)}_{N_0}$,
the on-line cost of evaluating $\mathfrak{c}(N)$ for one window
length is $\approx7N$ flops. For \ici{} the complexity is $\approx 2N + O(1)$ 
with pre-computed $\mathbf{v}$, $\lVert\mathbf{v}\rVert_2$, while for \awve{} it is
$\approx 2N^2 + 3N$ where $\mathsf{B}$ matrices are pre-computed.
  
\surede{} reduces the per-window cost from $O(N^2)$ (\awve) to $O(N)$. 
While \ici{} also achieves $O(N)$ complexity, it relies on a heuristic parameter $\Gamma$ 
rather than direct MSE estimation. 
For a candidate set $\mathcal{N} \hspace{-2pt} = \hspace{-2pt} \{5, \dots, 80\}$, 
\awve{} requires $\approx 13,040$ flops per sample at $N=80$, 
whereas \surede{} requires only $\approx 560$—a $23\times$ reduction. 
Consequently, the total online cost for all \surede{} candidates is 
less than a \emph{single} \awve{} evaluation at the largest window.

\vspace{-3mm}
\subsection{Conceptual Comparison: \ici{}, \awve{}, and \surede{}}
\label{sec:conceptual_comparison}

The three adaptive methods (\ici{}, \awve{}, and \surede{}) all aim to optimize the bias-variance 
trade-off by selecting an appropriate window length $N$, but they employ fundamentally different mechanisms.

\ici{} and \awve{} are sequential testing methods that expand the window until a statistical criterion is violated. 
Specifically, \ici{} performs an inter-bandwidth consistency test~\cite{lepski1990problem, lepski1997optimal}, 
intersecting confidence intervals across different scales. 
It inherits oracle inequalities ensuring the selected risk is within $O(\log K)$ of the oracle risk, 
though its performance relies on the user-tuned parameter $\Gamma$. 
\awve{} instead utilizes an intra-bandwidth goodness-of-fit test, 
checking if a residual bound properly explains the local polynomial model via the threshold $\alpha$.

In contrast, \surede{} is a risk minimization method that directly estimates the MSE \eqref{eq:mse_decomposition} 
at each candidate $N$ using Stein’s unbiased risk estimate. 
Unlike the heuristic thresholds of \ici{} and \awve{}, \surede{} targets the actual MSE 
objective without requiring a selection-criterion threshold. 
This framework also naturally extends to soft combination (Sec.~\ref{sec:soft_combining}), 
which avoids discrete switching and yields smoother adaptation by weighting multiple window lengths.


\vspace{-2mm}
\section{Oracle Inequality for \sure{} Selection}
\label{sec:oracle_inequality}

A natural question about any data-driven model selection procedure is:
how close is it to the \emph{oracle} that knows the best model (length) in advance?  
In this section we establish a finite-sample oracle inequality for the hard-combining
\surede{} algorithm (Algorithm~\ref{alg:surede_hard}), showing that it
achieves near-oracle risk up to a term that grows only logarithmically
in the number of candidate window lengths.  We then compare this
result to the classical \ici{} oracle bound. 

\vspace{-3mm}
\subsection{Setup and Notation}
\label{sec:oracle_setup}

Recall the observation model \eqref{eq:observation_model}
For each candidate window length $ N \hspace{-2pt} \in \hspace{-2pt} \mathcal{N} \hspace{-2pt} = \hspace{-2pt} \{N_1, \ldots, N_K\}$,
the LS derivative filter $\mathbf{v}(N)$ produces the estimate
$\hat{x}^{(n)}_k(N) = \mathbf{v}(N)^T \mathbf{y}_k(N)$.
The \emph{risk} (expected MSE conditioned on the signal) at window
length~$N$ is then given by
\begin{equation}
  R(N) \hspace{-2pt}
  \triangleq \hspace{-2pt} \E\bigl\{
    \bigl(\hat{x}^{(n)}_k(N) \hspace{-2pt} - \hspace{-2pt} x^{(n)}_k\bigr)^2
  \bigr\}
  \hspace{-2pt} = \hspace{-2pt} b^2(N) \hspace{-2pt} + \hspace{-2pt} V(N),
  \label{eq:risk_def}
\end{equation}
where $b(N) = \mathbf{v}(N)^T \mathbf{x}_k(N) - x^{(n)}_k$ is the
bias and
$V_w(N) = \mathbf{v}(N)^T \mathsf{Q}_w^{(N)} \mathbf{v}(N)$ is the
variance.

The \emph{oracle} window length and risk are then given by
\vspace{-2mm}
\begin{equation}
  N^* = \argmin_{N \in \mathcal{N}}\; R(N), \qquad R^* \triangleq R(N^*).
  \label{eq:oracle_N}
\end{equation}
\vspace{-1mm}

The hard-combining \surede{} selects
$\hat{N} \hspace{-2pt} = \hspace{-2pt} \argmin_{N \in \mathcal{N}} \mathfrak{c}(N)$,
where $\mathfrak{c}(N)$ is the \sure{} cost
\eqref{eq:sure_cost}.  By Theorem~\ref{thm:sure_unbiased},
$\E\{\mathfrak{c}(N)\} = R(N) + C$ for a constant~$C$ independent
of~$N$.  We can therefore write
\begin{equation}
  \mathfrak{c}(N)
  = R(N) + C + \xi(N),
  \label{eq:sure_decomposition}
\end{equation}
where $\xi(N) \triangleq \mathfrak{c}(N) - \E\{\mathfrak{c}(N)\}$ is
a zero-mean random variable (RV) representing the fluctuation of the \sure{}
cost around the true risk.

\vspace{-3mm}
\subsection{An Oracle Inequality}
\label{sec:oracle_main}                                                                

A central concern in data-driven selection is whether adaptation might "follow the noise," 
selecting a window with low \sure{} cost but high true risk. 
The oracle inequality below addresses this by bounding the \surede{} risk relative to the oracle risk, 
$R^* = \min_{N \in \mathcal{N}} R(N)$, with an additive penalty scaling as $O(\sqrt{\log K})$. 
While any fixed window $N_j$ trivially satisfies an additive bound $R(N_j) = R^* + \Delta_j$, 
the boundary cases are particularly illustrative: under standard bias monotonicity, 
the shortest window satisfies $R(N_1) \le R^* + V(N_1)$ and the longest $R(N_K) \le R^* + b^2(N_K)$. 
The following theorem establishes that the adaptive \surede{} procedure similarly avoids catastrophic overfitting, 
maintaining a near-oracle risk across the entire candidate set.

\begin{theorem}[Oracle inequality for \surede{}]
\label{thm:oracle_ineq}
Let $\hat{N} = \arg\min_{N \in \mathcal{N}} \mathfrak{c}(N)$ be the window selected from $K$ candidates.
Under the Gaussian observation model \eqref{eq:observation_model},
the risk $R(\hat{N})$ satisfies $R(\hat{N}) \le R^* + 2\max_{N \in \mathcal{N}} |\xi(N)|$. 
Moreover, defining the effective noise variance via
\begin{equation}
    V_{\mathrm{eff}}(N)
    \triangleq \max\bigl(V_{w}(N),\;
      \mathbf{s}^T\mathsf{Q}_w\mathbf{s}\bigr),
    \label{eq:Veff_def}
\end{equation}
and $\overline{V} \hspace{-2pt} \triangleq \hspace{-2pt} \max_{N \in \mathcal{N}} V_{\mathrm{eff}}(N)$,
then with probability at least $1 - \delta$:
\begin{equation}
  R(\hat{N})
  \hspace{-2pt} \le \hspace{-2pt} R^*
    \hspace{-1pt} + \hspace{-1pt} C_3 \overline{V} \sqrt{\log(2K/\delta)}
    \hspace{-1pt} + \hspace{-1pt} C_4 \overline{V} \log(2K/\delta).
  \label{eq:oracle_ineq_prob}
\end{equation}
Here, $C_3, C_4 \hspace{-2pt} > \hspace{-2pt} 0$ are constants depending on $N_0$,
$\max_N \lVert\mathbf{v}(N)\rVert$, and the signal.
Finally the following in-expectation inequality holds:
\begin{equation}
  \E\bigl[R(\hat{N})\bigr]
  \le R^*
    + C_5\,\overline{V}\,
      \sqrt{\log K},
  \label{eq:oracle_ineq_expect}
\end{equation}
with $C_5$ depends on the same quantities as $C_3, C_4$.
\end{theorem}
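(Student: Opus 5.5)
The plan is the standard three-step argument for empirical risk minimisers: a deterministic comparison, then a chaos-type tail bound for $\xi(N)$, then a union bound over the $K$ candidates.

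\textbf{Deterministic step.} From the decomposition \eqref{eq:sure_decomposition} and the definition of $\hat N$ as minimiser of $\mathfrak{c}$, we have $\mathfrak{c}(\hat N)\le\mathfrak{c}(N^*)$. Since $C$ does not depend on $N$, it cancels and
\begin{equation*}
R(\hat N)\le R^*+\xi(N^*)-\xi(\hat N)\le R^*+2\max_{N\in\mathcal{N}}|\xi(N)|.
\end{equation*}
This is the first claim. Everything else reduces to controlling $\max_N|\xi(N)|$.

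\textbf{Structure of $\xi(N)$.} Write $\mathbf{y}=\mathbf{x}+\mathbf{w}$ on the longest window, padding $\mathbf{v}(N)$ and $\mathbf{s}$ with zeros. Then \eqref{eq:sure_cost} is a quadratic form $\mathbf{y}^T\mathsf{M}_N\mathbf{y}$ plus a constant, where
\begin{equation*}
\mathsf{M}_N=N_0\mathbf{v}\mathbf{v}^T-\tfrac12(\mathbf{v}\mathbf{s}^T+\mathbf{s}\mathbf{v}^T).
\end{equation*}
Hence
\begin{equation*}
\xi(N)=\bigl(\mathbf{w}^T\mathsf{M}_N\mathbf{w}-\operatorname{Tr}(\mathsf{M}_N\mathsf{Q}_w)\bigr)+2\mathbf{x}^T\mathsf{M}_N\mathbf{w}.
\end{equation*}
The first term is a centred Gaussian chaos of order two, and the second is a Gaussian linear term.

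Whitening with $\mathbf{w}=\mathsf{Q}_w^{1/2}\mathbf{g}$, the chaos part involves the matrix $\mathsf{Q}_w^{1/2}\mathsf{M}_N\mathsf{Q}_w^{1/2}$, which has rank at most two. Its Frobenius and operator norms are bounded by combinations of $N_0\,\mathbf{v}^T\mathsf{Q}_w\mathbf{v}$ and $\sqrt{\mathbf{v}^T\mathsf{Q}_w\mathbf{v}\;\mathbf{s}^T\mathsf{Q}_w\mathbf{s}}$. Both are $\le c(N_0)\,V_{\mathrm{eff}}(N)\le c(N_0)\,\overline V$; this is exactly why the definition \eqref{eq:Veff_def} takes a maximum.

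The linear part is Gaussian with variance $4\,\mathbf{x}^T\mathsf{M}_N\mathsf{Q}_w\mathsf{M}_N\mathbf{x}$. Bounding it by $\overline V$ times a factor involving $|\mathbf{v}^T\mathbf{x}|$, $|\mathbf{s}^T\mathbf{x}|$ and $\|\mathbf{v}\|$ is where the signal dependence of $C_3,C_4$ enters.

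\textbf{Tail bound and union.} Apply the Hanson--Wright inequality (the Laurent--Massart form suffices for rank two) to the chaos part, and a Gaussian tail bound to the linear part. This gives, for each $N$,
\begin{equation*}
\Pr\bigl\{|\xi(N)|>a\overline V\sqrt t+b\overline V t\bigr\}\le 2e^{-t}.
\end{equation*}
Taking $t=\log(2K/\delta)$ and a union bound over the $K$ candidates gives, with probability at least $1-\delta$, a bound on $\max_N|\xi(N)|$. Combined with the deterministic step, this yields \eqref{eq:oracle_ineq_prob} with $C_3=2a$ and $C_4=2b$.

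\textbf{In-expectation bound.} Integrate the tail: $\E\max_N|\xi(N)|\le\int_0^\infty\Pr\{\max_N|\xi(N)|>u\}\,du$, bounding the probability by $\min\bigl(1,\,2Ke^{-t(u)}\bigr)$. This gives $\E\max|\xi|\lesssim\overline V(\sqrt{\log K}+\log K)$.

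\textbf{Main obstacle.} The difficulty is getting $\sqrt{\log K}$ alone in \eqref{eq:oracle_ineq_expect} rather than $\log K$. The route I would try first is to let the subexponential $\log K$ term be absorbed, in the regime of interest, into the constant. Concretely, I would use $\log K\le\sqrt{\log K}\,\sqrt{\log K_{\max}}$ for the bounded candidate sets used in practice, so that $C_5$ may depend on the same quantities as $C_3,C_4$.

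If that is not acceptable, the fallback is to exploit the rank-two structure. The chaos term equals $N_0(Z_v^2-\E Z_v^2)-(Z_vZ_s-\E Z_vZ_s)$ with $Z_v=\mathbf{v}^T\mathbf{w}$ and $Z_s=\mathbf{s}^T\mathbf{w}$. Since $\mathbf{s}$ is common to all $N$, $Z_s$ is shared across candidates. One could then condition on $Z_s$ and check whether the maximum over the $K$ correlated Gaussians $Z_v$ can be handled while retaining only a $\sqrt{\log K}$ dependence.
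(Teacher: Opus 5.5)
Your deterministic step and your high-probability step are the same as the paper's proof: the basic inequality from $\mathfrak{c}(\hat N)\le\mathfrak{c}(N^*)$, a Hanson--Wright tail for each $\xi(N)$, and a union bound over the $K$ candidates. Two small points. First, the cross term in \eqref{eq:sure_cost} carries a factor $2$, so $\mathsf{M}_N=N_0\mathbf{v}\mathbf{v}^T-(\mathbf{v}\mathbf{s}^T+\mathbf{s}\mathbf{v}^T)$ and the chaos is $N_0(Z_v^2-V_w)-2(Z_vZ_s-C_{vs})$. Second, the linear part has standard deviation of order $\sqrt{\overline V}$, so writing its tail as $a\overline V\sqrt t$ forces $a$ to depend on the noise level; the paper makes the same identification when it lets $\overline V$ ``bound the variance terms'' of Lemma~\ref{lem:sure_variance}.

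The genuine gap is the in-expectation bound \eqref{eq:oracle_ineq_expect}. Integrating the union-bound tail gives $\overline V(\sqrt{\log K}+\log K)$, as you found. The paper does exactly this integration and then asserts that the sub-Gaussian tail dominates. That does not follow: once $\log(2K/\delta)>c_1$, the level-$\delta$ threshold of $2Ke^{-c_1\min(t^2/\overline V^2,\,t/\overline V)}$ is $\overline V\log(2K/\delta)/c_1$.
\begin{itemize}
\item Your first fix makes $C_5$ depend on $K_{\max}$. That is not among the permitted quantities, and it empties the rate of content.
\item Your fallback still targets the two-sided quantity $\E\max_N|\xi(N)|$, which really can be of order $N_0\overline V\log K$. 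It contains $\max_N N_0Z_v(N)^2$, which for nearly uncorrelated candidate filters of comparable variance grows like $\overline V\log K$, however you treat $Z_s$.
\end{itemize}

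The missing idea is not to pass through $2\max_N|\xi(N)|$ in expectation. Since $N^*$ is deterministic, $\E\xi(N^*)=0$, so taking expectations in the basic inequality gives
\begin{equation*}
\E R(\hat N)\le R^*-\E\xi(\hat N)\le R^*+\E\max_{N\in\mathcal{N}}\bigl(-\xi(N)\bigr).
\end{equation*}
Only the lower tail of $\xi$ matters. Using $Z_v^2\ge 0$ and $|C_{vs}|\le\sqrt{V_wS_w}\le\overline V$,
\begin{equation*}
-\xi(N)\le|\xi_1(N)|+(N_0+2)\,\overline V+2|Z_s|\,|Z_v(N)|,
\end{equation*}
so the upper tail of $N_0Z_v^2$, which is what produces $\log K$, never enters. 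The linear part contributes at most $\sigma_1\sqrt{2\log(2K)}$, where $\sigma_1^2=\max_N\Var[\xi_1(N)]$. For the cross term, your observation that $Z_s$ is shared is exactly what is needed. By Cauchy--Schwarz and $\E\max_N Z_v(N)^2\le 4\overline V\log K+2\overline V\log 2$,
\begin{equation*}
\E\Bigl[|Z_s|\max_N|Z_v(N)|\Bigr]\le\sqrt{S_w}\,\sqrt{\E\max_N Z_v(N)^2}\le\overline V\sqrt{4\log K+2\log 2}.
\end{equation*}
For $K\ge 2$ the constant $(N_0+2)\overline V$ is absorbed into a multiple of $\overline V\sqrt{\log K}$, and $K=1$ is trivial. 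This yields \eqref{eq:oracle_ineq_expect}, with the same noise-level caveat on the $\sigma_1$ term.
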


  \begin{proof}
    The proof is provided in Appendix~\ref{app:oracle_proofs}.
  \end{proof}

\begin{remark}[Theoretical Guarantee vs. Practical Benefit]
  \label{rem:bound_vs_practice}
  The oracle inequality \eqref{eq:oracle_ineq_expect} is a "safety guarantee": 
  it ensures that adaptation does not significantly \emph{hurt} performance, 
  capping the risk at $O(\overline{V}\sqrt{\log K})$ above the oracle. 
  However, the true value of \surede{} is its ability to \emph{help} across shifting regimes. 
  While a fixed short window $N_1$ suffers from high variance in noisy periods 
  and a fixed long window $N_K$ incurs high bias during rapid signal changes, 
  \surede{} dynamically tracks the oracle $N^*$. 
  By adapting to both noise levels and local signal dynamics, 
  it maintains near-oracle risk where static strategies would otherwise fail.
\end{remark}

\begin{remark}[Minimax Optimality of the $\sqrt{\log K}$ Rate]
  \label{rem:sqrt_log_K_optimal}
  The $\sqrt{\log K}$ rate in \eqref{eq:oracle_ineq_expect} is minimax optimal for 
  SURE-based selection among $K$ linear estimators. 
  Cavalier and Golubev~\cite{cavalier2006risk} established that any unbiased risk 
  selection rule achieves an oracle inequality with an additive remainder 
  of $O(\overline{V}\sqrt{\log K})$, and that there exist signal configurations 
  where this penalty is unavoidable, yielding a minimax risk of $\Theta(\overline{V}\sqrt{\log K})$.

  The intuition for why the $\sqrt{\log K}$ penalty cannot be avoided
  stems from extreme value theory for Gaussian RVs, see
  \cite{leadbetter1983extremes} and \cite{boucheron2013concentration}.
  The \sure{} cost $\mathfrak{c}(N)$ deviates from the true risk
  $R(N)$ by a zero-mean fluctuation $\xi(N)$.  When selecting the
  minimizer over $K$ candidates, the selection is influenced by the
  \emph{maximum} of $K$ such fluctuations.  Classical results on
  the maximum of $K$ independent standard Gaussian RVs state that
  $\E[\max_{1 \le i \le K} Z_i] \hspace{-2pt} = \hspace{-2pt} \Theta(\sqrt{\log K})$. 
  Hence the ``winner's curse'', namely the tendency of the selected candidate to have
  a favorably low fluctuation, introduces an irreducible bias of
  order $\sqrt{\log K}$ in the risk estimate. 
  Fortunately, this logarithmic growth is extremely mild; even for 
  $K \hspace{-2pt} = \hspace{-2pt} 100$, $\sqrt{\log 100} \hspace{-2pt} \approx \hspace{-2pt} 2.15$. 
  This ensures practitioners can include a generous set of candidate window lengths 
  without incurring meaningful performance degradation.
\end{remark}

\vspace{-3mm}
\subsection{Comparison with the \ici{} Oracle Bound and \awve{}}
\label{sec:ici_comparison}

The \ici{} rule \cite{goldenshluger1997adaptive, katkovnik2006local} is a nonparametric 
bandwidth selector whose oracle inequality \cite{lepski1997optimal} takes the form:
\begin{equation}
  \E\bigl[R(\hat{N}_{\mathrm{ICI}})\bigr]
  \le C_{\mathrm{ICI}}\,
    \min_{N \in \mathcal{N}} R(N)
    \cdot \log K,
  \label{eq:ici_oracle}
\end{equation}
where $C_{\mathrm{ICI}} > 0$ is a constant depending on the
threshold parameter~$\Gamma$.

\begin{remark}[Additive vs. Multiplicative Oracle Bounds]
  \label{rem:additive_vs_mult}
  The structural difference between the \surede{} bound \eqref{eq:oracle_ineq_expect} 
  and the \ici{} bound \eqref{eq:ici_oracle} highlights their performance 
  across different operating regimes:
  \begin{itemize}
    \item \textbf{\surede{} (Additive):} $R(\hat{N}) \le R^* + O(\overline{V}\sqrt{\log K})$. 
    The penalty is a constant that does not scale with the oracle risk. 
    This is superior in bias-dominated regimes (large $R^*$), 
    where the adaptation overhead remains fixed.
    \item \textbf{\ici{} (Multiplicative):} $R(\hat{N}) \le C \cdot R^* \log K$. 
    The penalty is proportional to the oracle risk. While potentially tighter in noise-dominated regimes where $R^*$ is small, 
    this bound grows significantly as signal complexity increases.
  \end{itemize}
  In balanced regimes where bias $\approx$ variance, the two bounds are comparable. However, \surede{} 
  ensures that the ``price of adaptation'' does not inflate alongside the underlying signal risk.
\end{remark}

\begin{remark}[Comparison with Leave-One-Out Cross-Validation]
  \label{rem:loocv_comparison}
  While Leave-one-out cross-validation (LOOCV) achieves similar oracle guarantees under broad conditions \cite{li1986asymptotic}, 
  it requires recomputing the estimator for each sample, leading to a $O(K \hspace{-2pt} \cdot \hspace{-2pt} N_{\max}^2)$ per-step cost. 
  \surede{} provides a comparable guarantee with significantly lower $O(K \hspace{-2pt} \cdot \hspace{-2pt} N_{\max})$ complexity, 
  leveraging the closed-form \sure{} expression available for linear estimators.
\end{remark}

\vspace{-3mm}
\subsection{Soft-Combining \surede{}}                                                  
\label{sec:soft_combining}    

Hard selection (Algorithm~\ref{alg:surede_hard}) picks a single window 
$\hat{N} \hspace{-2pt} = \hspace{-2pt} \arg\min_{N} \mathfrak{c}(N)$ at each sample. 
This ``winner-takes-all'' approach can be sensitive to fluctuations, 
causing abrupt switching between candidates with similar costs. 
To achieve smoother adaptation, we introduce soft-combining, which forms a \emph{weighted average} of all candidate estimates:
\vspace{-1mm}
\begin{equation}
    \hat{x}^{(n)}_{\mathrm{soft}}
    = \sum_{N \in \mathcal{N}} w(N)\,\hat{x}^{(n)}_k(N),
\label{eq:soft_estimate}
\end{equation}
\vspace{-1mm}
The weights $w(N)$ are derived from the \sure{} costs, 
constrained such that $w(N) \hspace{-2pt} \ge \hspace{-2pt} 0$ and $\sum w(N) \hspace{-2pt} = \hspace{-2pt} 1$. 
Our previous oracle inequality provides the theoretical foundation for determining these optimal weights. 

Ideally, the weights~$w$ should be chosen to minimize the risk
of the combined estimator. As the \sure{} cost $\mathfrak{c}(N)$ provides 
an unbiased estimate of the per-window risk (up to a constant), a
natural objective is
\vspace{-1mm}
\begin{equation}
\min_{w \in \Delta_K} \hspace{-1pt}
  \sum_{N \in \mathcal{N}} \hspace{-1pt} w(N)\mathfrak{c}(N), 
\label{eq:linear_objective}
\end{equation}
where $\Delta_K$ is the probability simplex.  
Minimizing~\eqref{eq:linear_objective}
directly yields the hard-combining solution $w(\hat{N}) = 1$, 
which is the strategy we wish to improve upon. 
Hence, to obtain a smooth weighting, we add an entropy regularizer that penalizes concentration:
\vspace{-1mm}
\begin{equation}
\min_{w \in \Delta_K}\;
  \sum_{N \in \mathcal{N}} \hspace{-1pt} w(N) \mathfrak{c}(N)
  \hspace{-1pt} + \hspace{-1pt} T \hspace{-1pt} \sum_{N \in \mathcal{N}} \hspace{-1pt} w(N) \log w(N),
\label{eq:ewa_objective}
\end{equation}
where $T \hspace{-2pt} > \hspace{-2pt} 0$ is a temperature parameter.  
The negative of the entropy term is maximized by the uniform distribution and minimized by a point mass, 
hence acting as a counterforce to the cost-driven concentration. 
A Lagrangian analysis yields the closed-form solution:
\vspace{-1mm}
\begin{equation}
w(N) = \frac{\exp\bigl(-\mathfrak{c}(N) / T\bigr)}
            {\displaystyle\sum_{M \in \mathcal{N}}
             M \cdot \exp\bigl(-\mathfrak{c}(M) / T\bigr)}.
\label{eq:soft_weights}
\end{equation} 
This is the \emph{exponential weighted aggregate} (EWA) studied
in the model aggregation literature
\cite{catoni2004statistical, dalalyan2008aggregation, rigollet2012sparse}.
The temperature~$T$ interpolates between two extremes. 
For $T \hspace{-2pt} \to \hspace{-2pt} 0$ weights concentrate on $\hat{N} = \argmin_N \mathfrak{c}(N)$, 
recovering hard-combining. On the other hand, for $T \hspace{-2pt} \to \hspace{-2pt} \infty$ 
weights become proportional to the prior, ignoring the costs entirely.
$T$ controls the bias--variance tradeoff of the
\emph{combining strategy itself}.  A small~$T$ tracks the
per-sample costs closely but amplifies cost
noise; a large~$T$ smooths over noise
but dilutes the estimate by mixing in suboptimal windows.

To determine the temperature $T$, we apply the EWA oracle
inequality \cite{catoni2004statistical, dalalyan2008aggregation} to the \surede{} framework:
\begin{equation}
  \E\bigl[R_{\mathrm{soft}}\bigr] \le R^* + T \log K + \frac{\max_{N} \Var\bigl[\xi(N)\bigr]}{4T},
\label{eq:ewa_bound}
\end{equation}
where $R^*$ is the oracle risk and $\xi(N)$ is the cost fluctuation.
Minimizing this bound with respect to $T$ balances the regularization
penalty ($T \log K$) against the estimation penalty ($\Var[\xi]/4T$).
The resulting optimal temperature yields an oracle inequality with
the same $\sqrt{\log K}$ rate as the hard-combining case (Theorem~\ref{thm:oracle_ineq}):
\begin{equation}
\E\bigl[R_{\mathrm{soft}}\bigr] \le R^* + \sqrt{\max_{N} \Var\bigl[\xi(N)\bigr] \cdot \log K}.
\label{eq:ewa_oracle}
\end{equation}
The full variance $\Var[\xi]$ includes a signal-dependent component that is 
not computable without oracle knowledge. At the oracle window~$N^*$, however, 
bias and variance are approximately balanced. We therefore calibrate $T$ to 
this balanced regime yielding the following theorem.
\begin{theorem}[Balanced-regime optimal temperature]
    \label{thm:optimal_temp}
    Let $V_w$ and $S_w \hspace{-2pt} = \hspace{-2pt} \mathbf{s}^T \mathsf{Q}_w \mathbf{s}$ be evaluated
    at the shortest candidate window $N_1$. Further define $C_{vs} \hspace{-2pt} = \hspace{-2pt} \mathbf{v}^T \mathsf{Q}_w \mathbf{s}$
    and $\nu \hspace{-2pt} \triangleq \hspace{-2pt} 2N_0^2 V_w^2 \hspace{-1pt} - 
         \hspace{-1pt} 8N_0 V_w C_{vs} \hspace{-1pt} + \hspace{-1pt} 4V_w S_w \hspace{-1pt} + 
         \hspace{-1pt} 4C_{vs}^2$.
    The balanced-regime optimal temperature is given by:
    \begin{equation}
    T^* = \sqrt{\frac{\nu}{2\,\log K}},
    \label{eq:T_star_explicit}
    \end{equation}
\end{theorem}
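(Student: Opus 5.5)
The plan is to take the EWA bound \eqref{eq:ewa_bound} as given, minimize it over $T$ explicitly, and then evaluate $\max_N \Var[\xi(N)]$ in closed form at $N_1$ under the balanced-regime calibration. The first step is elementary. The function $f(T)=T\log K + \Var[\xi]/(4T)$ is convex on $T>0$, and setting $f'(T)=0$ gives
\[
T^\star=\sqrt{\frac{\Var[\xi]}{4\log K}}.
\]
Substituting back reproduces \eqref{eq:ewa_oracle}. It therefore suffices to show that, in the balanced regime at the shortest window, $\Var[\xi]=2\nu$, since then $T^\star=\sqrt{\nu/(2\log K)}$.

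Next I compute $\Var[\xi(N)]$ exactly under Gaussian noise. Write $e=\mathbf{v}^T\mathbf{y}=m_e+a$ and $r=\mathbf{s}^T\mathbf{y}=m_r+b$. Here $(a,b)$ is zero-mean jointly Gaussian with $\Var a=V_w$, $\Var b=S_w$ and $\mathrm{Cov}(a,b)=C_{vs}$, while $m_e=\mathbf{v}^T\mathbf{x}$ and $m_r=\mathbf{s}^T\mathbf{x}$ are deterministic. Expanding \eqref{eq:sure_cost} gives
\[
\xi=\bigl[N_0(a^2-V_w)-2(ab-C_{vs})\bigr]+\bigl[(2N_0m_e-2m_r)\,a-2m_e\,b\bigr].
\]
The first bracket is a centered quadratic form and the second is linear. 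Odd Gaussian moments vanish, so the two brackets are uncorrelated. For the quadratic part, Isserlis' theorem gives $\Var(a^2)=2V_w^2$, $\Var(ab)=V_wS_w+C_{vs}^2$ and $\mathrm{Cov}(a^2,ab)=2V_wC_{vs}$. Hence
\[
\Var(\text{quadratic part})=2N_0^2V_w^2-8N_0V_wC_{vs}+4V_wS_w+4C_{vs}^2=\nu,
\]
which is exactly the signal-independent quantity in the statement. The linear part has variance $\mathbf{u}^T\mathsf{Q}_w\mathbf{u}$ with $\mathbf{u}=(2N_0m_e-2m_r)\mathbf{v}-2m_e\mathbf{s}$. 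This term is the signal-dependent component that cannot be computed without oracle knowledge.

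The final and hardest step is the balanced-regime calibration, where I argue that the linear part contributes the same order as the quadratic part, i.e.\ $\approx\nu$, so that $\Var[\xi]\approx 2\nu$. The argument I would try first goes as follows. At the oracle window the squared bias matches the variance, $b^2\approx V_w$. So the signal-dependent means entering $\mathbf{u}$, after centering (the component of the derivative that the filters reproduce exactly cancels between $N_0m_e$ and $m_r$ by construction of $\mathbf{s}$), have magnitude of order $\sqrt{V_w}$. Replacing each such mean by a zero-mean Gaussian surrogate with matching second moment turns the linear term $m\cdot a$ into a product term with $\mathbb{E}[(\cdot)^2]$ of order $V_w\cdot\Var(a)$. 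These surrogates are exactly the entries of the quadratic form, so the linear part reproduces the moment structure of $\nu$. Adding the two uncorrelated pieces gives $\Var[\xi]\approx 2\nu$.

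I expect the main obstacle to be making this matching precise rather than heuristic. The cancellation of the constant-derivative component inside $\mathbf{u}$ has to be justified, and the identification with $\nu$ holds only up to the calibration assumption. I would therefore state it explicitly as the defining assumption of the ``balanced regime'' rather than derive it. Finally, evaluating at $N_1$ is justified by noting that $V_w$, and hence $\nu$, is largest at the shortest window, so it upper-bounds $\max_N\Var[\xi(N)]$ in the same regime.
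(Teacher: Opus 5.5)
Your proposal follows the paper's proof step for step: minimize the EWA bound to get $T^\star=\sqrt{\max_N\Var[\xi(N)]/(4\log K)}$, split $\xi$ into orthogonal linear and quadratic parts with $\Var[\xi_2]=\nu$, invoke $\Var[\xi]=2\Var[\xi_2]$ in the balanced regime, and evaluate at $N_1$. Since the paper also simply asserts that balanced-regime identity, treating it as a defining assumption is the right call. You should drop the heuristic you sketched for it, though: the cancellation you cite occurs only in the coefficient of $\mathbf{v}^T\mathbf{w}$, whereas the coefficient $-2m_e$ of $\mathbf{s}^T\mathbf{w}$ contains the full derivative $x^{(n)}_k$, so the linear part carries a term $4(x^{(n)}_k)^2 S_w$ that no balance between squared bias and $V_w$ controls.
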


\begin{proof}
    The proof is provided in Appendix \ref{app:opt_temp_proof}.
\end{proof}

The term under the square root in \eqref{eq:T_star_explicit} is a dimensionless constant                                                                         
determined by the derivative order $n$, degree $p$, and candidate set parameters.                                                                                
Computed once during initialization, it requires no tuning.                                                                                                      
For $K \hspace{-2pt} = \hspace{-2pt}6$ and $N_1 \hspace{-2pt}= \hspace{-2pt} 4$,                                                                                 
the ratio $T^{*}/\sigma^2$ drops from $1.361$ (for $n \hspace{-2pt} = \hspace{-2pt} p \hspace{-2pt} = \hspace{-2pt} 1$)                                         
to $0.153$ (for $n \hspace{-2pt} = \hspace{-2pt} p \hspace{-2pt} = \hspace{-2pt}2$).                                                                             
This nine-fold reduction for second-derivative estimation reflects the steeper VRF decay ($N^{-5}$ vs.\ $N^{-3}$),                                               
which results in more precise \sure{} cost estimates and thus requires less regularization.         

Algorithm~\ref{alg:surede_soft} summarizes the soft-combining \surede{} procedure.  The differences from 
Algorithm~\ref{alg:surede_hard} are: (i)~the temperature $T^*$ is pre-computed from~\eqref{eq:T_star_explicit}; (ii)~the
output is the weighted average~\eqref{eq:soft_estimate} rather than a single selected estimate.
The computational cost per sample is identical to hard-combining. The temperature $T^*$ and all filter quantities are pre-computed,
adding no per-sample overhead.

\begin{algorithm}[t]
\caption{Soft-combining \surede{}}
\label{alg:surede_soft}
\begin{algorithmic}[1]
\Require A set $\mathcal{N} \hspace{-2pt} = \hspace{-2pt} \{N_1 \hspace{-2pt} < \hspace{-2pt} \cdots \hspace{-2pt} < \hspace{-2pt} N_M\}$;
       polynomial degree~$p$; derivative order~$n$;
       noise covariance $\mathsf{Q}_w$; parameter~$N_0$.
\Statex \emph{Initialization (once):}
\State Pre-compute $\mathbf{v}(N)$, $\mathbf{s}^{(n)}_{N_0}$, $\mathbf{v}(N)^T \mathsf{Q}_w^{(N)} \mathbf{s}^{(n)}_{N_0}$, 
      $\forall N \hspace{-2pt} \in \hspace{-2pt} \mathcal{N}$.
\State Compute $T^*$ from~\eqref{eq:T_star_explicit}
     using $N = N_1$.
\Statex \emph{Per sample $k$:}
\For{each $N \in \mathcal{N}$}
\State $e(N) \gets \mathbf{v}(N)^T \mathbf{y}_k(N)$
       \Comment{derivative estimate}
\State $r(N) \gets \bigl(\mathbf{s}^{(n)}_{N_0}\bigr)^T
       \mathbf{y}_k(N)$ \Comment{s-vector projection}
\State $\mathfrak{c}(N) \hspace{-2pt} \gets \hspace{-2pt} N_0 e(N)^2
          +  2\,\tau(N)
         \hspace{-2pt} - \hspace{-2pt} 2e(N) r(N)$      
\EndFor
\State $w(N) \gets
     \dfrac{N \cdot \exp\!\bigl(-\mathfrak{c}(N) / T^*\bigr)}
           {\sum_{M} M \cdot
            \exp\!\bigl(-\mathfrak{c}(M) / T^*\bigr)}$,
     \quad  $\forall N \in \mathcal{N}$
\State \Return $\hat{x}^{(n)}_{\mathrm{soft}}
     = \sum_{N} w(N)\,e(N)$
     \Comment{soft-combined estimate}
\end{algorithmic}
\end{algorithm}

\vspace{-3mm}
\section{Numerical Results}
\label{sec:numerical_results}

\subsection{Simulation Setup}
\label{sec:sim_setup}

The test signal comprises three 200-sample segments (periods 15, 40, and 100) to exercise short, medium, 
and long filter regimes. Observations are corrupted by Gaussian noise $\sigma \hspace{-2pt} \in \hspace{-2pt} \{0.005, 0.05, 0.15\}$, 
with results averaged over $M \hspace{-2pt} = \hspace{-2pt} 500$ Monte Carlo trials. We evaluate first-derivative estimation ($n=1$) using $p=1$ 
for all LS-based methods. Adaptive methods (\awve{}, \ici{} with $\Gamma \hspace{-2pt} = \hspace{-2pt} 2.0$, and \surede{}) optimize over 
$\mathcal{N} \hspace{-2pt} = \hspace{-2pt} \{4, 8, \dots, 24\}$, while fixed-length baselines use $N \hspace{-2pt} \in \hspace{-2pt} \{2, 4, 8, 24\}$.
Comparison baselines include a constant-velocity Kalman filter (process noise $\sigma_Q \hspace{-2pt} = \hspace{-2pt} \sigma_R$) 
and non-causal Savitzky--Golay filters (SG-11, SG-21, $p \hspace{-2pt} = \hspace{-2pt} 2$). 
Note that the SG filters use future samples, providing an inherent information advantage. 
To ensure fairness, neither the Kalman filter nor \surede{} uses scenario-specific tuning. 
The overall MSE is the mean of the three segments, excluding the first 50 samples. 
Reported differences exceeding 10\% are statistically significant ($p < 0.05$).

\vspace{-3mm}
\subsection{MSE Comparison}
\label{sec:mse_comparison}

Figure~\ref{fig:mse_comparison} shows the per-sample $\mse{}$ across all three noise levels.
In the fast sinusoid segment, short filters achieve low bias but high variance, 
while long filters suffer from smoothing-induced bias; 
all three adaptive methods select shorter windows and achieve $\mse{}$ close to the best fixed filter.
In the medium and slow sinusoid segments, longer filters become advantageous and 
the adaptive methods correctly extend their effective window lengths.
\surede{}'s soft-combining consistently yields the lowest $\mse{}$ among the adaptive 
methods across all segments and noise levels. 
A bias-variance decomposition (not shown) indicates that the adaptive methods achieve near-optimal trade-offs.
Their bias remains close to that of short filters in the fast segment, while their variance approaches that of long filters in the slow segments.

\begin{figure*}[t]
\centering
\includegraphics[width=0.9\textwidth]{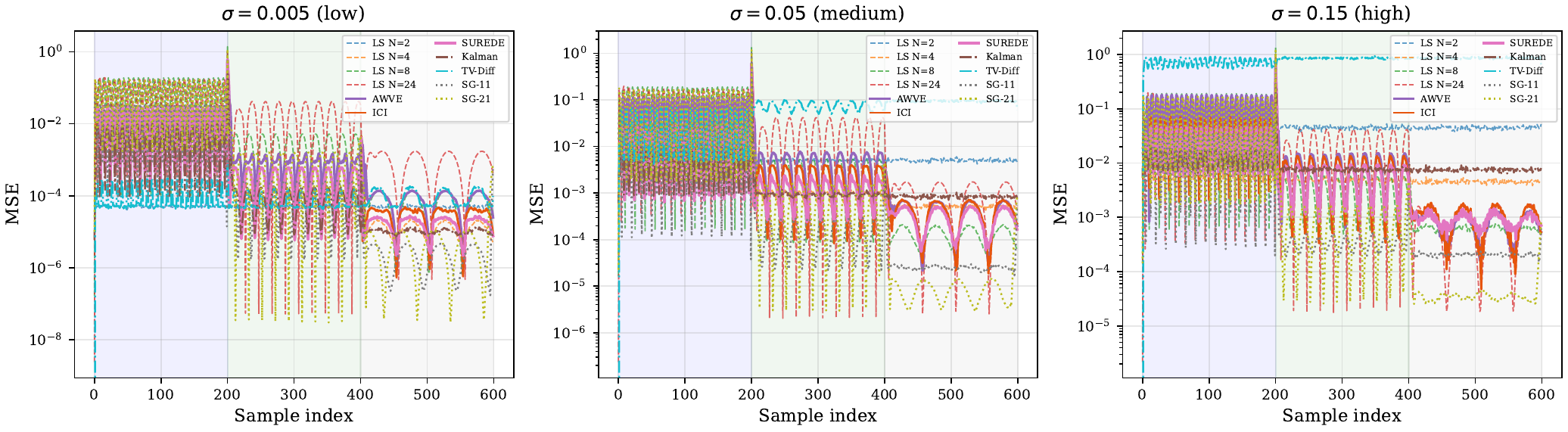}
\caption{Per-sample $\mse{}$ comparison across noise levels ($\sigma \hspace{-2pt} \in \hspace{-2pt} \{0.005, 0.05, 0.15\}$). 
Solid lines denote adaptive methods (\awve{}, \ici{}, \surede{}), 
dashed lines denote fixed-length LS filters, and dash-dot lines indicate alternative baselines. 
Background shading distinguishes the three signal segments.}
\vspace{-4mm}
\label{fig:mse_comparison}
\end{figure*}

Table~\ref{tab:mse_medium} summarizes segment-averaged $\mse{}$ for the medium noise regime ($\sigma=0.05$). 
\surede{} achieves the lowest overall $\mse{}$ among adaptive methods ($5.52 \times 10^{-3}$), 
outperforming \ici{} and \awve{} by 8\% and $4\times$, respectively. 
While the Kalman filter shows a localized advantage and SG-11 remains competitive, 
these behaviors shift at higher noise levels.

\begin{table}[t]
\centering
\caption{Mean $\mse{}$ ($\times 10^{-3}$) per segment, $\sigma = 0.05$. Best causal results in \textbf{bold}.}
\label{tab:mse_medium}
\addtolength{\tabcolsep}{-3pt}
\small 
\begin{tabular}{lcccc}
\toprule
Method & Fast & Med. & Slow & Overall \\
\midrule
LS $N\!=\!2$   & $\mathbf{5.02}$ & $5.01$ & $5.05$ & $5.03$ \\
LS $N\!=\!4$   & $14.9$ & $\mathbf{0.81}$ & $0.51$ & $5.42$ \\
LS $N\!=\!8$   & $95.8$ & $2.75$ & $\mathbf{0.13}$ & $32.9$ \\
LS $N\!=\!24$  & $85.7$ & $21.4$ & $0.86$ & $36.0$ \\
\midrule
\awve{}        & $60.9$ & $4.34$ & $0.38$ & $21.9$ \\
\ici{}         & $15.3$ & $2.33$ & $0.40$ & $6.00$ \\
\surede{}      & $15.0$ & $1.26$ & $0.30$ & $5.52$ \\
\midrule
Kalman         & $5.95$ & $0.92$ & $0.84$ & $\mathbf{2.57}$ \\
SG-11 (n.c.)   & $18.4$ & $0.17$ & $0.03$ & $6.20$ \\
SG-21 (n.c.)   & $82.9$ & $0.84$ & $0.02$ & $27.9$ \\
\bottomrule
\end{tabular}
\end{table}

Table~\ref{tab:mse_all_noise} compares overall $\mse{}$ across noise levels. 
\surede{} consistently leads the adaptive methods, outperforming \ici{} by margins that grow from 
negligible at low noise to 25\% at $\sigma=0.15$. 
In contrast, \awve{}'s performance degrades nearly fourfold over the same range. 
While the Kalman filter's temporal prediction provides an advantage at lower noise, 
it is surpassed by \surede{} in the high-noise regime. 
Fixed-length LS and SG filters remain relatively noise-insensitive, 
as their error is dominated by bias, rendering them uncompetitive against adaptive selection as noise increases.

\begin{table}[t]
\centering
\caption{Overall $\mse{}$ ($\times 10^{-3}$) across noise levels ($n = 1$). Best causal results in \textbf{bold}.}
\label{tab:mse_all_noise}
\addtolength{\tabcolsep}{-2pt} 
\small
\begin{tabular}{lccc}
\toprule
Method & $\sigma = 0.005$ & $\sigma = 0.05$ & $\sigma = 0.15$ \\
\midrule
LS $N\!=\!2$   & $0.05$ & $5.03$ & $45.0$ \\
LS $N\!=\!4$   & $4.91$ & $5.42$ & $9.35$ \\
LS $N\!=\!8$   & $32.8$ & $32.9$ & $33.3$ \\
LS $N\!=\!24$  & $36.0$ & $36.0$ & $36.0$ \\
\midrule
\awve{}        & $5.24$ & $21.9$ & $32.2$ \\
\ici{}         & $4.92$ & $6.00$ & $12.1$ \\
\surede{}      & $4.91$ & $5.52$ & $\mathbf{9.18}$ \\
\midrule
Kalman         & $\mathbf{1.74}$ & $\mathbf{2.57}$ & $9.18$ \\
SG-11 (n.c.)   & $6.18$ & $6.20$ & $6.39$ \\
SG-21 (n.c.)   & $27.9$ & $27.9$ & $28.0$ \\
\bottomrule
\end{tabular}
\end{table}

\vspace{-3mm}
\subsection{Robustness to Noise Variance Misspecification}
\label{sec:noise_misspec}

All three adaptive methods require estimate of $\sigma^2$.
In practice, this is rarely known exactly, so robustness to misspecification is critical.
Figure~\ref{fig:noise_sensitivity} and shows the overall $\mse{}$ when the assumed noise 
standard deviation~$\hat\sigma$ deviates from the true value 
($\sigma \hspace{-2pt} = \hspace{-2pt} 0.05$) by factors ranging from $0.5\times$ to $2.0\times$.

\begin{figure}[t]
\centering
\includegraphics[width=0.90\linewidth]{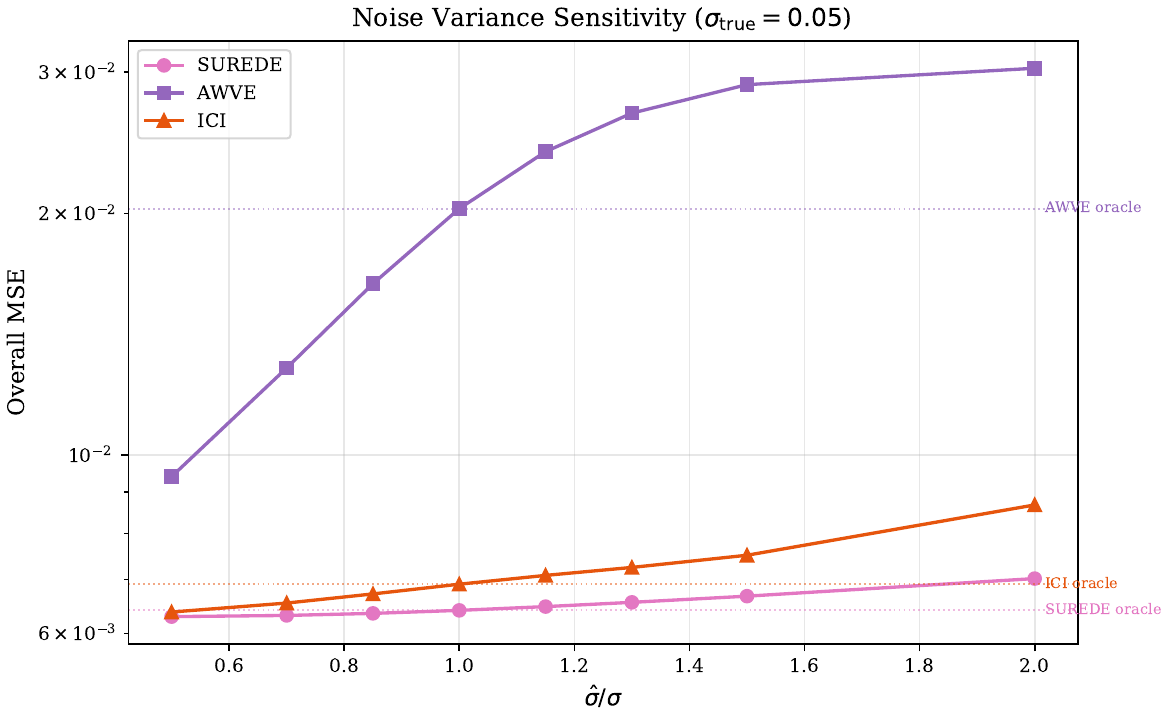}
\caption{Overall $\mse{}$ as a function of the noise misspecification
ratio $\hat\sigma / \sigma$.}
\label{fig:noise_sensitivity}
\end{figure}

\surede{} exhibits superior robustness, degrading by only 12\% over a $4\times$ variance 
misspecification range ($\hat\sigma/\sigma \in [0.5, 2.0]$), 
compared to 36\% for \ici{} and 222\% for \awve{}. 
This stability stems from the soft-combining mechanism: blending candidate filters prevents 
the catastrophic misselection common in the binary, threshold-sensitive decisions of \awve{} and \ici{}.

For practical implementation, the noise variance is estimated via the Median Absolute Deviation (MAD) 
of first differences $\mathbf{d}$ as $\hat\sigma = 1.4826 \cdot \mathrm{MAD}(\mathbf{d})$ \cite{donoho1994ideal}. 
This estimator is robust to outliers and can be applied over a rolling window to track non-stationary noise.

\vspace{-3mm}
\subsection{An Alternative Signal}
\label{sec:generalization}

To ensure the results are signal-independent, we evaluated a second test signal comprising a chirp (varying $N^*$), 
a ramp-to-quadratic transition (accelerating dynamics), and a quintic polynomial (high-order smoothness).
Table~\ref{tab:second_signal_mse} confirms that \surede{} remains the most effective adaptive method, 
particularly in the chirp segment where it matches the performance of the shortest fixed filter (LS~$N=2$) 
while avoiding the bias seen in longer filters. 
While the Kalman filter's constant-velocity model yields the lowest chirp error, 
it lacks the flexibility to handle the non-constant velocity of the ramp and quintic segments. 
Conversely, fixed SG filters fail to generalize, suffering severe bias in the ramp-to-quadratic segment. 
\surede{}'s competitive performance on the quintic segment ($9.30 \times 10^{-5}$) further highlights the advantage of soft-combining for smooth transitions.

\begin{table}[t]
\centering
\caption{Mean $\mse{}$ ($\times 10^{-3}$) for secondary test signal at $\sigma = 0.05$. Best causal results in \textbf{bold}.}
\label{tab:second_signal_mse}
\addtolength{\tabcolsep}{-3pt}
\small
\begin{tabular}{lcccc}
\toprule
Method & Chirp & Ramp-Quad & Quintic & Overall \\
\midrule
LS $N\!=\!2$   & $5.04$ & $4.99$ & $5.02$ & $5.02$ \\
LS $N\!=\!24$  & $57.7$ & $\mathbf{0.01}$ & $0.08$ & $19.3$ \\
\midrule
\awve{}        & $25.6$ & $0.04$ & $\mathbf{0.07}$ & $8.57$ \\
\ici{}         & $5.71$ & $0.03$ & $0.09$ & $1.94$ \\
\surede{}      & $5.02$ & $0.07$ & $0.09$ & $1.73$ \\
\midrule
Kalman         & $\mathbf{2.25}$ & $0.83$ & $0.84$ & $\mathbf{1.31}$ \\
SG-11 (n.c.)   & $6.02$ & $13.8$ & $0.02$ & $6.62$ \\
SG-21 (n.c.)   & $26.5$ & $7.49$ & $0.01$ & $11.3$ \\
\bottomrule
\vspace{-4mm}
\end{tabular}
\end{table}

\begin{figure}[t]
\centering
\includegraphics[width=0.90\linewidth]{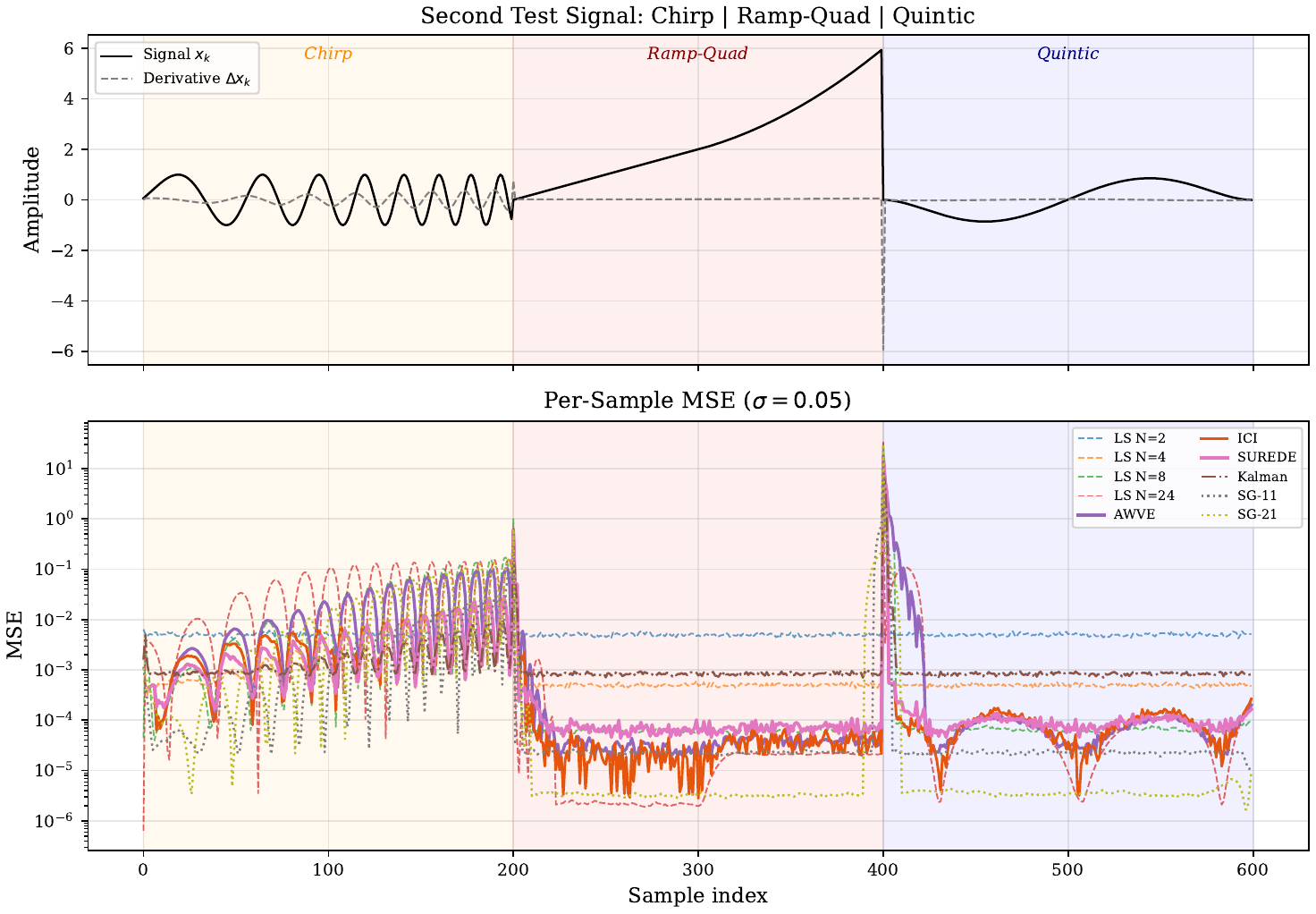}
\caption{Second test signal. Top: clean signal and derivative.
Bottom: per-sample $\mse{}$ ($\sigma = 0.05$).  \surede{} tracks the
varying dynamics across all three segments.}
\label{fig:second_signal_mse}
\vspace{-4mm}
\end{figure}

\vspace{-3mm}
\subsection{Discussion of Savitzky--Golay Performance}
\label{sec:sg_honest}

SG-11’s comparable performance at medium noise ($6.20 \hspace{-2pt} \times \hspace{-2pt} 10^{-3}$) stems from three non-representative advantages: 
(i) an \emph{information advantage} via non-causality (5-sample look-ahead), 
(ii) superior local bias properties from a higher polynomial order 
($p \hspace{-2pt} = \hspace{-2pt} 2$ vs. $p \hspace{-2pt} = \hspace{-2pt} 1$), and 
(iii) a window length ($w \hspace{-2pt} = \hspace{-2pt} 11$) that happens to be near-optimal for this specific signal.
However, these benefits are fragile. Unlike \surede{}, which is inherently causal and adapts its window at each 
sample via the Theorem~\ref{thm:oracle_ineq} oracle, SG filters require manual tuning of both $w$ and $p$. 
Poor parameter selection, seen in SG-21's fivefold higher MSE, and fixed-window constraints lead to substantial 
performance degradation in non-stationary settings. 
This is confirmed by the EuRoC real-data experiments (see Section~\ref{sec:real_data}).

\vspace{-4mm}
\section{EuRoC MAV Case Study}
\label{sec:real_data}
\vspace{-1mm}

To validate \surede{} beyond synthetic signals, we evaluate all methods
on position-to-velocity differentiation using the EuRoC MAV
dataset~\cite{burri2016euroc}, which provides Vicon-tracked
trajectories at ${\sim}200$\,Hz with sub-millimeter accuracy.

\vspace{-3mm}
\subsection{Experimental Protocol}
\label{sec:euroc_setup}

We validate the proposed framework on the EuRoC MAV dataset \cite{burri2016euroc} using two sequences: 
\textbf{MH\_01\_easy} (smooth flight) and \textbf{V1\_02\_medium} (aggressive maneuvers). 
Ground truth velocity is obtained via central differences on the Vicon position signal, 
which is corrupted by i.i.d. Gaussian noise $\sigma \hspace{-2pt} \in \hspace{-2pt} \{2, 5, 10, 50\}$,mm to 
simulate tracking uncertainties ranging from high-end optical systems to integrated IMU drift \cite{merriaux2017study}. 

To maintain a viable per-sample signal-to-noise ratio (SNR) for risk discrimination, 
we down-sample the native 200,Hz signal by a factor of 6 ($\approx$33,Hz). 
This aligns the real-world SNR with our synthetic experiments ($SNR \approx 6$ at $\sigma=5$,mm) and 
follows standard practice for reducing high-frequency noise amplification in derivative estimation \cite{brown1992algorithms}.

Consistent with our synthetic study, all adaptive methods optimize over 
$\mathcal{N} \hspace{-2pt} = \hspace{-2pt} \{4, 8, \dots, 24\}$ with $p \hspace{-2pt} = \hspace{-2pt} 1$, 
while fixed LS baselines use $N \hspace{-2pt}\in \hspace{-2pt} \{4, 8, 16, 24\}$. 
The Kalman filter employs a constant-velocity model with process noise $\sigma_Q \hspace{-2pt} = \hspace{-2pt} \sigma_R$. 
No per-scenario tuning was performed for any method, ensuring a fair comparison across all flight dynamics.

\vspace{-3mm}
\subsection{Results and Discussion}
\label{sec:euroc_results_dis}

Tables~\ref{tab:euroc_mh01} and~\ref{tab:euroc_v102} report per-sequence RMSE across four noise levels. 
At $\sigma \hspace{-2pt}  \le \hspace{-2pt}  10$\,mm, \surede{} consistently achieves the lowest RMSE among adaptive methods, 
outperforming \ici{} and \awve{} by up to 16\% and 41\%, respectively. 
Figure~\ref{fig:euroc_velocity} illustrates this advantage: while short filters (LS~$N \hspace{-2pt} = \hspace{-2pt} 4$) amplify noise 
and long filters (LS~$N=24$) lag during maneuvers, \surede{} adapts its window locally to track transients smoothly.

The Kalman filter, despite its temporal prediction advantage, collapses as noise increases; 
at $\sigma \hspace{-2pt} = \hspace{-2pt} 10$\,mm, its RMSE is nearly double that of \surede{}. 
This suggests that the fixed bandwidth of the $\sigma_Q \hspace{-2pt} = \hspace{-2pt} \sigma_R$
Kalman model cannot accommodate the mismatch between constant-velocity assumptions and aggressive flight dynamics. 
At extreme noise ($\sigma \hspace{-2pt} = \hspace{-2pt} 50$\,mm), long-window methods (LS~$N \hspace{-2pt} = \hspace{-2pt} 16$, SG-21) dominate as 
the optimal window shifts beyond the adaptive range, with \ici{}'s hard selection providing better 
regularization than \surede{}'s soft blending in this high-variance regime.

Consistent with synthetic results, \surede{} remains remarkably robust to noise misspecification on real data. 
Over a $4\times$ range of $\hat{\sigma}/\sigma$, its RMSE varies by only 14\%–19\%, whereas \ici{} and \awve{} degrade
by up to 70\% and 89\%, respectively. 

\begin{table}[t]
\centering
\caption{RMSE (m/s) on MH\_01\_easy. Best causal in \textbf{bold}.}
\label{tab:euroc_mh01}
\addtolength{\tabcolsep}{-4pt}
\small
\begin{tabular}{lcccc}
\toprule
Method & 2\,mm & 5\,mm & 10\,mm & 50\,mm \\
\midrule
LS $N\!=\!4$   & 0.054 & 0.086 & 0.157 & 0.752 \\
LS $N\!=\!8$   & 0.068 & \textbf{0.072} & \textbf{0.085} & 0.263 \\
LS $N\!=\!16$  & 0.116 & 0.116 & 0.117 & \textbf{0.145} \\
LS $N\!=\!24$  & 0.164 & 0.164 & 0.164 & 0.170 \\
\midrule
\awve{}        & 0.077 & 0.094 & 0.113 & 0.216 \\
\ici{}         & 0.061 & 0.082 & 0.106 & 0.199 \\
\surede{}      & \textbf{0.054} & 0.074 & 0.099 & 0.281 \\
\midrule
Kalman (CV)    & 0.056 & 0.103 & 0.198 & 0.963 \\
SG-11 (n.c.)   & 0.105 & 0.106 & 0.109 & 0.190 \\
SG-21 (n.c.)   & 0.080 & 0.080 & 0.082 & 0.099 \\
\bottomrule
\end{tabular}
\end{table}

\begin{table}[t]
\centering
\caption{RMSE (m/s) on V1\_02\_medium. Best causal in \textbf{bold}.}
\label{tab:euroc_v102}
\addtolength{\tabcolsep}{-4pt}
\small
\begin{tabular}{lcccc}
\toprule
Method & 2\,mm & 5\,mm & 10\,mm & 50\,mm \\
\midrule
LS $N\!=\!4$   & \textbf{0.060} & \textbf{0.091} & 0.158 & 0.727 \\
LS $N\!=\!8$   & 0.116 & 0.118 & 0.124 & 0.280 \\
LS $N\!=\!16$  & 0.211 & 0.211 & 0.210 & \textbf{0.230} \\
LS $N\!=\!24$  & 0.267 & 0.267 & 0.266 & 0.272 \\
\midrule
\awve{}        & 0.112 & 0.138 & 0.160 & 0.264 \\
\ici{}         & 0.075 & 0.105 & 0.146 & 0.258 \\
\surede{}      & 0.066 & 0.092 & \textbf{0.122} & 0.312 \\
\midrule
Kalman (CV)    & 0.053 & 0.103 & 0.196 & 0.944 \\
SG-11 (n.c.)   & 0.153 & 0.155 & 0.157 & 0.222 \\
SG-21 (n.c.)   & 0.129 & 0.130 & 0.131 & 0.143 \\
\bottomrule
\end{tabular}
\vspace{-4mm}
\end{table}

\begin{figure}[t]
\centering
\includegraphics[width=\linewidth]{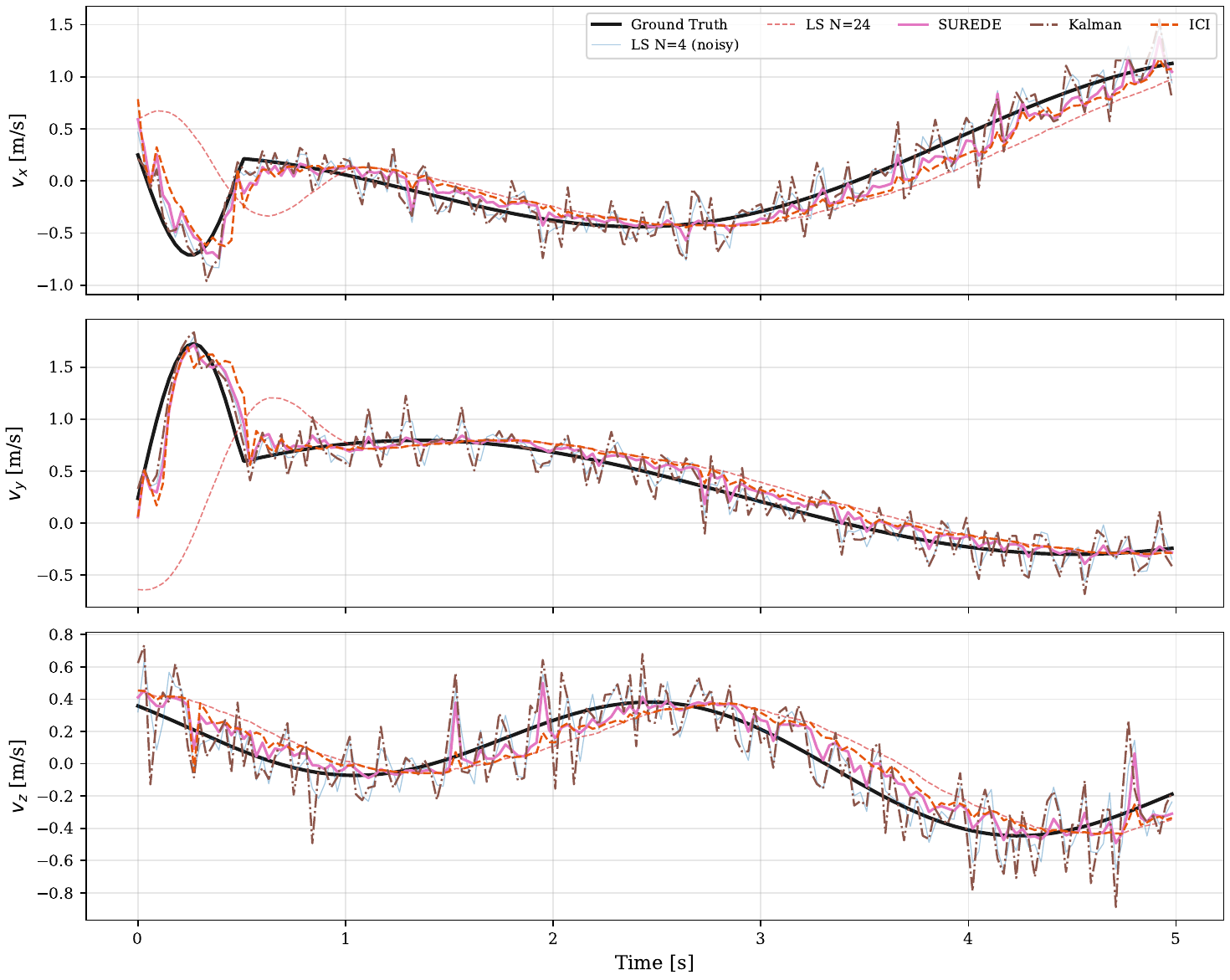}
\caption{Velocity estimates on V1\_02\_medium ($\sigma = 10$\,mm). \surede{} (magenta) balances 
the noise-sensitivity of short windows (blue) and the lag of long windows (red dashed).}
\label{fig:euroc_velocity}
\vspace{-5mm}
\end{figure}

\vspace{-3mm}
\section{Conclusion}
\label{sec:conclusion}

We presented \surede{}, a tuning-free framework for causal derivative estimation that minimizes a 
closed-form SURE cost to select or soft-combine a bank of FIR filters. 
Theoretically, we established that \surede{} satisfies a minimax-optimal $O(\sqrt{\log K})$ oracle inequality. 
Experimentally, \surede{} consistently outperformed adaptive methods (\ici{}, \awve{}) and parametric baselines (Kalman filter) 
across synthetic and real-world EuRoC MAV datasets at low-to-moderate noise. 
Notably, \surede{} exhibited superior robustness to noise misspecification, with MSE degrading by only 16\% 
over a fourfold variance range—significantly less than the 70\%--144\% seen in competing methods. 
While nonparametric FIR approaches are inherently limited at extreme noise levels compared to 
tuned parametric models, the general \surede{} framework remains applicable to any linear estimator and derivative order. 
Future work will focus on online noise calibration, extension to event-driven sampling, and integration 
with adaptive smoothing filters like the 1\euro{} filter.


\vspace{-4mm}
\bibliographystyle{IEEEtran}
\bibliography{references}

\appendix
\numberwithin{equation}{section}
\vspace{-3mm}
\section{Derivation of the \surede{} Cost Function}
\label{app:derivation}

We derive the \sure{} cost \eqref{eq:sure_cost} by mapping the derivative estimation problem 
to the general SURE framework \cite{eldar2009gsure}. 
Starting from the observation model $\mathbf{y} \hspace{-2pt} = \hspace{-2pt} \mathbf{x} \hspace{-2pt} + \hspace{-2pt} \mathbf{w}$, 
we apply the $n$-th order derivative filter matrix $\mathsf{H}_d^{(n)}$ 
to obtain the transformed model 
$\tilde{\mathbf{y}} \hspace{-2pt} = \hspace{-2pt} \mathbf{x}^{(n)} \hspace{-2pt} + \hspace{-2pt} \tilde{\mathbf{w}}$, 
where $\tilde{\mathbf{y}} \hspace{-2pt} = \hspace{-2pt} \mathsf{H}_d^{(n)}\mathbf{y}$, 
$\mathbf{x}^{(n)} \hspace{-2pt} = \hspace{-2pt} \mathsf{H}_d^{(n)}\mathbf{x}$, 
and $\tilde{\mathbf{w}} \hspace{-2pt} \sim \hspace{-2pt} \mathcal{N}(\mathbf{0}, \tilde{\mathsf{Q}})$ 
with $\tilde{\mathsf{Q}} \hspace{-2pt} = \hspace{-2pt} \mathsf{H}_d^{(n)}\mathsf{Q}_w^{(N)}(\mathsf{H}_d^{(n)})^T$. 
In this setting, the ML estimator is $\hat{\boldsymbol{\theta}}_{\mathrm{ML}} \hspace{-2pt} = \hspace{-2pt} \tilde{\mathbf{y}}$.

The estimator $\mathbf{h}(\mathbf{u}) \hspace{-2pt} = \hspace{-2pt} \mathsf{R}\tilde{\mathbf{y}}$ 
produces $N_0$ simultaneous derivative estimates, where $\mathsf{R}$ applies the LS filter $\mathbf{v}^T$. 
Under the constant-derivative approximation, the SURE terms are evaluated as follows:
\begin{itemize}
    \item \textbf{Estimation Norm:} $\|\mathbf{h}(\mathbf{u})\|^2 \hspace{-2pt} = \hspace{-2pt} N_0 (\mathbf{v}^T \mathbf{y})^2$.
    \item \textbf{Jacobian Trace:} Since $\mathbf{h}$ is linear, 
    the Jacobian $\partial\mathbf{h}/\partial\mathbf{u} \hspace{-2pt} = \hspace{-2pt} \mathsf{R}\tilde{\mathsf{Q}}$. 
    Its trace reduces to $\operatorname{Tr}(\partial\mathbf{h}/\partial\mathbf{u}) \hspace{-2pt} = \hspace{-2pt} \mathbf{v}^T \mathsf{Q}_w^{(N)}\mathbf{s}^{(n)}_{N_0}$, 
    where the $s$-vector aggregates observation contributions across the $N_0$ outputs.
    \item \textbf{Cross Term:} The term $\mathbf{h}(\mathbf{u})^T\hat{\boldsymbol{\theta}}_{\mathrm{ML}}$ simplifies to 
    $(\mathbf{v}^T \mathbf{y}) (\mathbf{s}^{(n)}_{N_0})^T \mathbf{y}$ by utilizing the 
    identity $\mathbf{1}_{N_0}^T \mathsf{H}_d^{(n)} \mathbf{y} \hspace{-2pt} = \hspace{-2pt} (\mathbf{s}^{(n)}_{N_0})^T \mathbf{y}$.
\end{itemize}

Substituting these into the general SURE expression 
$S(N) \hspace{-2pt} = \hspace{-2pt} \|\boldsymbol{\theta}\|^2 \hspace{-1pt} + \hspace{-1pt} \|\mathbf{h}(\mathbf{u})\|^2 \hspace{-1pt} + \hspace{-1pt} 2\operatorname{Tr}(\frac{\partial\mathbf{h}}{\partial\mathbf{u}}) \hspace{-1pt} - \hspace{-1pt} 2\mathbf{h}(\mathbf{u})^T\hat{\boldsymbol{\theta}}_{\mathrm{ML}}$ 
and omitting $\|\boldsymbol{\theta}\|^2$ (which is $N$-independent) yields the \surede{} cost function \eqref{eq:sure_cost}. \qed


\vspace{-4mm}
\section{Proof of Thm.~\ref{thm:oracle_ineq}}
\label{app:oracle_proofs}


\vspace{-2mm}
\subsection{Variance of the \surede{} Cost}
\label{sec:sure_variance}

To establish the oracle inequality, we control the fluctuations 
$\xi(N) \hspace{-2pt} = \hspace{-2pt} \mathfrak{c}(N) \hspace{-2pt} - \hspace{-2pt} \mathbb{E}[\mathfrak{c}(N)]$. 
Substituting the Gaussian decompositions 
$\mathbf{v}^T \mathbf{y} \hspace{-2pt} = \hspace{-2pt} \mu_v \hspace{-2pt} + \hspace{-2pt} Z_v$ 
and $\mathbf{s}^T \mathbf{y} \hspace{-2pt} = \hspace{-2pt} \mu_s \hspace{-2pt} + \hspace{-2pt} Z_s$ into \eqref{eq:sure_cost} 
(where $\mu_v \hspace{-2pt} = \hspace{-2pt} \mathbf{v}^T \mathbf{x}$ and $\mu_s \hspace{-2pt} = \hspace{-2pt} \mathbf{s}^T \mathbf{x}$ 
are deterministic signal components, 
and $Z_v \hspace{-2pt} = \hspace{-2pt} \mathbf{v}^T \mathbf{w}$ and $Z_s \hspace{-2pt} = \hspace{-2pt} \mathbf{s}^T \mathbf{w}$ are zero-mean Gaussian noise terms), 
the fluctuation partitions into orthogonal linear and quadratic components ($\xi_1, \xi_2$) 
belonging to the first and second Wiener chaoses~\cite{nualart2006malliavin}:
\vspace{-1mm}
\begin{align}
    \xi(N) &= \underbrace{(2N_0\mu_v - 2\mu_s)Z_v - 2\mu_v Z_s}_{\xi_1} \nonumber \\
    & \quad+ \underbrace{N_0(Z_v^2 - V_w) - 2(Z_vZ_s - C_{vs})}_{\xi_2}.
    \label{eq:xi_decomp_short}
\end{align}
Since $\mathbb{E}[\xi_1\xi_2] \hspace{-2pt} = \hspace{-2pt} 0$, the total 
variance is $\text{Var}[\xi] \hspace{-2pt} = \hspace{-2pt} \text{Var}[\xi_1] \hspace{-2pt} + \hspace{-2pt} \text{Var}[\xi_2]$. 
This decomposition ensures that fluctuations are bounded by the filter's bias and noise variance, 
providing the concentration necessary for model selection.

\begin{lemma}[Variance bound for the \surede{} cost]
  \label{lem:sure_variance}
  Under Gaussian noise, the variance of the fluctuation satisfies:
  \begin{equation}
    \text{Var}\bigl[\xi(N)\bigr] \hspace{-2pt} \le \hspace{-2pt} C_1\bigl(b^2(N) \hspace{-2pt} + \hspace{-2pt} 1\bigr)V_{\text{eff}}(N) \hspace{-2pt} + \hspace{-2pt} C_2 V_{\text{eff}}^2(N),
    \label{eq:sure_var_bound_short}
  \end{equation}
  where $V_{\text{eff}}(N)$ defined in \eqref{eq:Veff_def}, $b(N)$ is the filter bias, 
  $C_1$ depends on the signal derivative $x^{(n)}_k$, and $C_2 = 2N_0^2 + 8N_0 + 8$.
\end{lemma}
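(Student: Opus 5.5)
The plan is to compute $\Var[\xi(N)]$ exactly from the Wiener-chaos decomposition \eqref{eq:xi_decomp_short} and then bound each piece. Since $\E[\xi_1\xi_2]=0$, we have $\Var[\xi]=\Var[\xi_1]+\Var[\xi_2]$, so the two chaoses can be handled separately. The pair $(Z_v,Z_s)=(\mathbf{v}^T\mathbf{w},\mathbf{s}^T\mathbf{w})$ is jointly Gaussian and zero-mean, with $\Var[Z_v]=V_w$, $\Var[Z_s]=S_w=\mathbf{s}^T\mathsf{Q}_w\mathbf{s}$ and $\mathrm{Cov}(Z_v,Z_s)=C_{vs}$. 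By Cauchy--Schwarz, $|C_{vs}|\le\sqrt{V_wS_w}\le V_{\mathrm{eff}}(N)$, and by definition $V_w,S_w\le V_{\mathrm{eff}}(N)$. These three inequalities are the only place $V_{\mathrm{eff}}$ enters.

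\emph{Quadratic (second-chaos) part.} Using Isserlis' theorem we get $\Var[Z_v^2]=2V_w^2$, $\Var[Z_vZ_s]=V_wS_w+C_{vs}^2$ and $\mathrm{Cov}(Z_v^2,Z_vZ_s)=2V_wC_{vs}$. Hence
\begin{equation*}
\Var[\xi_2]=2N_0^2V_w^2-8N_0V_wC_{vs}+4V_wS_w+4C_{vs}^2 ,
\end{equation*}
which is exactly the quantity $\nu$ of Theorem~\ref{thm:optimal_temp}. Bounding each term in absolute value by the corresponding power of $V_{\mathrm{eff}}$ gives
\begin{equation*}
\Var[\xi_2]\le(2N_0^2+8N_0+4+4)\,V_{\mathrm{eff}}^2 ,
\end{equation*}
which is the stated $C_2$.

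\emph{Linear (first-chaos) part.} Write $\xi_1=aZ_v+bZ_s$ with $a=2N_0\mu_v-2\mu_s$ and $b=-2\mu_v$. Then
\begin{equation*}
\Var[\xi_1]=a^2V_w+b^2S_w+2abC_{vs}\le 2(a^2+b^2)\,V_{\mathrm{eff}}
\end{equation*}
by $2|ab|\le a^2+b^2$. It remains to express $a$ and $b$ through the bias.

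\begin{itemize}
\item For $b$: $\mu_v=x^{(n)}_k+b(N)$, so $b^2\le 8\bigl((x^{(n)}_k)^2+b^2(N)\bigr)$.
\item For $a$: by \eqref{eq:s_vector_def}, $\mu_s=\mathbf{s}^T\mathbf{x}=\bar{\mathbf{1}}_{N_0}^T\mathsf{H}_d^{(n)}\mathbf{x}$ is the sum of the first $N_0$ derivative samples. Under the constant-derivative assumption defining $N_0$, this equals $N_0x^{(n)}_k$, so $a=2N_0\bigl(\mu_v-x^{(n)}_k\bigr)=2N_0\,b(N)$ and $a^2=4N_0^2b^2(N)$.
\end{itemize}

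Combining the two items yields $\Var[\xi_1]\le C_1\bigl(b^2(N)+1\bigr)V_{\mathrm{eff}}(N)$ with $C_1=2\max\bigl(4N_0^2+8,\,8(x^{(n)}_k)^2\bigr)$, which depends on $x^{(n)}_k$ as claimed. Summing the bounds for $\Var[\xi_1]$ and $\Var[\xi_2]$ gives \eqref{eq:sure_var_bound_short}.

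The main obstacle is the identification $\mu_s\approx N_0x^{(n)}_k$, which only holds under the constant-derivative approximation. If it is used only approximately, I would carry the deviation $\varepsilon=\mu_s-N_0x^{(n)}_k$ as an additive term in $a$. Since $\varepsilon$ is a deterministic signal quantity, it can be absorbed into the signal-dependent constant $C_1$ via $(2N_0b-2\varepsilon)^2\le 8N_0^2b^2+8\varepsilon^2$. The Isserlis computations themselves are routine.
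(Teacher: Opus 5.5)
Your proposal is correct and follows essentially the same route as the paper. Both arguments use the orthogonal chaos split $\Var[\xi]=\Var[\xi_1]+\Var[\xi_2]$, the exact Gaussian fourth-moment formula for $\Var[\xi_2]$ bounded termwise via $|C_{vs}|\le V_{\mathrm{eff}}(N)$ to get $C_2=2N_0^2+8N_0+8$, and a Cauchy--Schwarz bound on the linear part with the signal terms absorbed into $C_1$. The only cosmetic difference is that you make the identification $\mu_s\approx N_0x^{(n)}_k$ explicit, so that $a_1=2N_0b(N)$; the paper instead uses the weighted bound $(|a_1|+|a_2|)^2\le 6(N_0+1)^2b^2(N)+3\alpha^2$ and lumps $\mu_s$ into the $N$-independent constant $\alpha$.
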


\begin{proof}
  \textbf{Linear Part:} $\xi_1 \hspace{-2pt} = \hspace{-2pt} a_1 Z_v \hspace{-2pt} + \hspace{-2pt} a_2 Z_s$ 
  is a weighted sum of Gaussian variables with weights $a_1 \hspace{-2pt} = \hspace{-2pt} 2N_0\mu_v \hspace{-2pt} - \hspace{-2pt} 2\mu_s$ 
  and $a_2 \hspace{-2pt} = \hspace{-2pt} -2\mu_v$. 
  By Cauchy-Schwarz, $\text{Var}[\xi_1] \hspace{-2pt} \le \hspace{-2pt} (|a_1| \hspace{-2pt} + \hspace{-2pt} |a_2|)^2 V_{\text{eff}}$. 
  Given $\mu_v \hspace{-2pt} = \hspace{-2pt} b(N) \hspace{-2pt} + \hspace{-2pt} x^{(n)}_k$ and 
  $\mu_s \hspace{-2pt} \hspace{-2pt} = \hspace{-2pt} \hspace{-2pt} \mathbf{s}^T\mathbf{x}$, 
  the weights satisfy 
  $(|a_1| \hspace{-2pt} + \hspace{-2pt} |a_2|)^2 \hspace{-2pt} \le \hspace{-2pt} 6(N_0 \hspace{-1pt} + \hspace{-1pt}1)^2 b^2(N) \hspace{-1pt} + \hspace{-1pt} 3\alpha^2$, 
  where $\alpha$ aggregates all terms that are independent of $N$ (specifically $x^{(n)}_k$ and $N_0$).
  This yields $\text{Var}[\xi_1] \hspace{-2pt} \le \hspace{-2pt} C_1 (b^2(N) \hspace{-2pt} + \hspace{-2pt} 1) V_{\text{eff}}$.

  \textbf{Quadratic Part:} Using the fourth-moment formula for jointly Gaussian variables, 
  $\text{Var}[\xi_2] \hspace{-2pt} = \hspace{-2pt} 2N_0^2 V_w^2 \hspace{-1pt} - \hspace{-1pt} 8N_0 V_w C_{vs} \hspace{-1pt} + \hspace{-1pt} 4V_w S_w \hspace{-1pt} + \hspace{-1pt} 4C_{vs}^2$. 
  Applying the bound $|C_{vs}| \hspace{-2pt} \le \hspace{-2pt} V_{\text{eff}}$ (by Cauchy--Schwarz), 
  we obtain $\text{Var}[\xi_2] \hspace{-2pt} \le \hspace{-2pt} (2N_0^2 \hspace{-1pt} + \hspace{-1pt} 8N_0 \hspace{-1pt} + \hspace{-1pt} 8)V_{\text{eff}}^2$. 
  
  Combining both parts concludes the proof.
\end{proof}

\vspace{-4mm}
\subsection{Concentration of \surede{} Fluctuations}
\label{sec:concentration}

To prove the oracle inequality (Thm.~\ref{thm:oracle_ineq}), we must bound the tail probabilities of the cost fluctuations $\xi(N)$. 
Since $\xi(N)$ is an at-most-degree-2 polynomial in the Gaussian vector $\mathbf{w}$, 
it can be expressed as a sum of a quadratic form and a linear term: 
$\xi(N) \hspace{-2pt} = \hspace{-2pt} \mathbf{w}^T \mathsf{M}(N) \mathbf{w} \hspace{-1pt} + \hspace{-1pt} \mathbf{m}(N)^T \mathbf{w} \hspace{-1pt} - \hspace{-1pt} \mathbb{E}[\mathbf{w}^T\mathsf{M}(N)\mathbf{w}]$,
Here $\mathsf{M}(N)$ and $\mathbf{m}(N)$ are symmetric matrix and vector
that depend on the signal~$\mathbf{x}$, the filter $\mathbf{v}(N)$, and vector $\mathbf{s}$.

\begin{lemma}[Hanson--Wright Concentration]
  \label{lem:concentration}
  $\forall N \hspace{-2pt} \in \hspace{-2pt} \mathcal{N}$, there exists a constant $c_1 \hspace{-2pt} > \hspace{-2pt} 0$ such that $\forall t \hspace{-2pt} > \hspace{-2pt} 0$:
  \vspace{-1mm}
  \begin{align}
    \Pr\bigl(|\xi(N)| > t\bigr) & \le \nonumber \\
    & \hspace{-45pt} 2\exp\left( -c_1 \min\left( \frac{t^2}{\text{Var}[\xi(N)]}, \frac{t}{V_{\text{eff}}(N)} \right) \right).
    \label{eq:concentration}
  \end{align}
\end{lemma}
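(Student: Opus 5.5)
The plan is to split $\xi(N)$ into its two Wiener-chaos components, $\xi=\xi_1+\xi_2$, as in \eqref{eq:xi_decomp_short}, bound the tail of each piece separately, and then combine the two bounds with a union bound at level $t/2$. Throughout, write $\mathbf{w}=(\mathsf{Q}_w^{(N)})^{1/2}\mathbf{g}$ with $\mathbf{g}\sim\mathcal{N}(\mathbf{0},\mathsf{I})$. The representation $\xi(N)=\mathbf{w}^T\mathsf{M}\mathbf{w}+\mathbf{m}^T\mathbf{w}-\E[\mathbf{w}^T\mathsf{M}\mathbf{w}]$ then becomes a centered quadratic form plus a linear form in the standard Gaussian vector $\mathbf{g}$.

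\textbf{Linear part.} The term $\xi_1=a_1Z_v+a_2Z_s$ is exactly Gaussian with variance $\Var[\xi_1]$. The standard Gaussian tail therefore gives
\[
\Pr(|\xi_1|>t/2)\le 2\exp\bigl(-t^2/(8\Var[\xi_1])\bigr).
\]
Since $\Var[\xi_1]\le\Var[\xi]$ by orthogonality of the chaoses, this is already of the form \eqref{eq:concentration}.

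\textbf{Quadratic part.} Reading off \eqref{eq:xi_decomp_short},
\[
\xi_2=\mathbf{w}^T\mathsf{M}\mathbf{w}-\E[\mathbf{w}^T\mathsf{M}\mathbf{w}],\qquad \mathsf{M}=N_0\mathbf{v}\mathbf{v}^T-(\mathbf{v}\mathbf{s}^T+\mathbf{s}\mathbf{v}^T).
\]
In terms of $\mathbf{g}$, the relevant matrix is $\tilde{\mathsf{M}}=\mathsf{Q}_w^{1/2}\mathsf{M}\mathsf{Q}_w^{1/2}$. This matrix has rank at most two and lies in the span of $\mathbf{a}=\mathsf{Q}_w^{1/2}\mathbf{v}$ and $\mathbf{b}=\mathsf{Q}_w^{1/2}\mathbf{s}$, with $\|\mathbf{a}\|^2=V_w$ and $\|\mathbf{b}\|^2=S_w$. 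The Hanson--Wright inequality (in its Gaussian form, e.g.\ Rudelson--Vershynin, or Laurent--Massart after diagonalizing) gives
\[
\Pr(|\xi_2|>t/2)\le 2\exp\!\Bigl(-c\min\Bigl(\tfrac{t^2}{\|\tilde{\mathsf{M}}\|_F^2},\tfrac{t}{\|\tilde{\mathsf{M}}\|_{op}}\Bigr)\Bigr).
\]
Two identifications turn this into \eqref{eq:concentration}. First, for Gaussians $\Var[\xi_2]=2\|\tilde{\mathsf{M}}\|_F^2$, which is precisely the fourth-moment formula used in Lemma~\ref{lem:sure_variance}, so the Frobenius norm can be replaced by $\Var[\xi_2]\le\Var[\xi]$. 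Second, the operator norm satisfies
\[
\|\tilde{\mathsf{M}}\|_{op}\le N_0\|\mathbf{a}\|^2+2\|\mathbf{a}\|\|\mathbf{b}\|\le (N_0+2)V_{\mathrm{eff}}(N),
\]
using the definition of $V_{\mathrm{eff}}$ in \eqref{eq:Veff_def}.

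\textbf{Combining.} The union bound gives
\[
\Pr(|\xi|>t)\le \Pr(|\xi_1|>t/2)+\Pr(|\xi_2|>t/2).
\]
Each term is bounded by $2\exp(-c\min(t^2/\Var[\xi],\,t/((N_0+2)V_{\mathrm{eff}})))$. The factor of $4$ in front is absorbed by shrinking $c_1$ in the regime where the exponent exceeds $\log 2$. In the complementary regime the bound exceeds one and is trivial once $c_1\le\log 2$ is arranged. Consequently $c_1$ depends only on $N_0$ and the absolute Hanson--Wright constant.

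\textbf{Main obstacle.} The delicate step is the operator-norm bound: it must be expressed through $V_{\mathrm{eff}}$ without picking up dependence on the signal. I expect this to be fine because $\mathsf{M}$ involves only $\mathbf{v}$ and $\mathbf{s}$, while all signal dependence sits in the linear part, and the linear part enters only through its variance. A second point to check is that $\Var[\xi_i]\le\Var[\xi]$, which relies on $\E[\xi_1\xi_2]=0$ as established before Lemma~\ref{lem:sure_variance}.
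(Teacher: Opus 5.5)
Your proposal is correct and follows the paper's route---Hanson--Wright, with $\Var[\xi(N)]$ governing the sub-Gaussian regime and $V_{\mathrm{eff}}(N)$ bounding the spectral norm. It also fills in what the paper's one-line citation skips: you give the signal-dependent linear chaos $\xi_1$ its own exact Gaussian tail (the quadratic-form inequality does not cover it), whiten the noise, and verify $\|\tilde{\mathsf{M}}\|_F^2=\Var[\xi_2]/2$ and $\|\tilde{\mathsf{M}}\|_{op}\le (N_0+2)V_{\mathrm{eff}}(N)$, which makes $c_1$ explicit and signal-independent (one small fix: in absorbing the prefactor $4$, just take $c_1$ to be half the pre-combination constant; the condition $c_1\le\log 2$ is not needed, since whenever $c_1\min(\cdot,\cdot)<\log 2$ the right-hand side already exceeds one).
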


\begin{proof}
  The result follows from the Hanson--Wright inequality for quadratic forms of Gaussian vectors \cite{rudelson2013hansonwright}. 
  The sub-Gaussian tail ($t^2$ term) dominates for moderate deviations, while the sub-exponential tail ($t$ term) governs large deviations. 
  Here, $\text{Var}[\xi(N)]$ controls the fluctuations in the small-$t$ regime, 
  while $V_{\text{eff}}$ scales with the spectral norm of $\mathsf{M}(N)$, 
  providing the necessary concentration for model selection.
\end{proof}

\vspace{-4mm}
\subsection{Proof of Theorem \ref{thm:oracle_ineq}}
\label{sec:thm2_proof}

We follow a three-step concentration argument. 

\textbf{1) Basic Inequality:} By the optimality of $\hat{N}$ with respect to the cost $\mathfrak{c}(N)$, 
we have $\mathfrak{c}(\hat{N}) \hspace{-2pt} \le \hspace{-2pt} \mathfrak{c}(N^*)$. 
Expanding this using the decomposition \eqref{eq:sure_decomposition} yields:
\begin{equation}
    R(\hat{N}) \hspace{-2pt}  \le \hspace{-2pt} R^* \hspace{-2pt} + \hspace{-2pt} \xi(N^*) \hspace{-2pt} - \hspace{-2pt} \xi(\hat{N}) \le R^* \hspace{-2pt} + \hspace{-2pt} 2\max_{N \in \mathcal{N}} |\xi(N)|.
    \label{eq:proof_step2}
\end{equation}

\textbf{2) Concentration and Union Bound:} We control the term $\max_N |\xi(N)|$ by applying Lemma~\ref{lem:concentration} with a union bound over the $K$ candidates in $\mathcal{N}$. For any $t > 0$:
\vspace{-1mm}
\begin{equation}
    \Pr\left( \max_{N \in \mathcal{N}} |\xi(N)| > t \right) \le 2K e^{ -c_1 \min\left( \frac{t^2}{\overline{V}^2}, \frac{t}{\overline{V}} \right) },
    \label{eq:union_bound}
\end{equation}
where $\overline{V}$ bounds the variance terms from Lemma~\ref{lem:sure_variance}. 
Setting the right-hand side to $\delta$ and solving for $t$ confirms that the fluctuations scale 
as $\overline{V}\sqrt{\log(K/\delta)}$, leading to the high-probability bound \eqref{eq:oracle_ineq_prob}.

\textbf{3) Expectation Bound:} Finally, integrating the tail probability via $\mathbb{E}[\max |\xi|] = \int_0^\infty \Pr(\max|\xi| > t)\,dt$ 
and noting that the sub-Gaussian tail dominates for moderate $t$ establishes the expected risk bound \eqref{eq:oracle_ineq_expect}. 
This confirms that the SURE-selected estimator tracks the oracle risk to within an optimal $O(\sqrt{\log K})$ factor. 

\vspace{-3mm}
\section{Proof of Thm. \ref{thm:optimal_temp}}
\label{app:opt_temp_proof}

Minimizing the bound~\eqref{eq:ewa_bound} with respect to $T$, 
yields the following expression:
\begin{equation}
    T^{\ast} = \sqrt{\frac{\max_{N \in \mathcal{N}} \Var[\xi(N)] }{4 \log K}}.
    \label{eq:opt_T}
\end{equation}
The cost fluctuation $\xi(N)$ decomposes into a linear part
$\xi_1$ (signal-dependent) and a quadratic part $\xi_2$
(signal-independent), which are orthogonal by the Wiener chaos
structure, see~\eqref{eq:xi_decomp_short}.
In the balanced regime it holds that $\Var[\xi(N)] \hspace{-2pt} = \hspace{-2pt} 2\Var[\xi_2(N)]$,
and $\text{Var}[\xi_2]$ is maximized at the shortest candidate window $N_1$. 
Plugging the expression for $\text{Var}[\xi_2]$ from the proof of Lemma \ref{lem:sure_variance},
evaluated at the shortest candidate window $N_1$, into~\eqref{eq:opt_T} leads to~\eqref{eq:T_star_explicit}.
 

\end{document}